\documentclass[prx,twocolumn,preprintnumbers,superscriptaddress,amsmath,amssymb,nofootinbib,nobalancelastpage]{revtex4-2}
\usepackage{graphicx, xcolor}
\graphicspath{{figures/}}
\usepackage{amsfonts, amsthm, microtype, mathrsfs, bbm}
\usepackage[colorlinks,allcolors=blue]{hyperref}
\usepackage{braket, mathtools, physics, enumerate}
\usepackage[capitalize]{cleveref}  


\newtheorem{theorem}{Theorem}
\newtheorem{lemma}{Lemma}
\DeclarePairedDelimiter\ceil{\lceil}{\rceil}

\newcommand{\1}{\mathbbm{1}}
\renewcommand{\O}{\mathcal O}
\newcommand{\Q}{\mathcal Q}

\newcommand{\dagg}{^\dagger}
\newcommand{\eps}{\epsilon}

\renewcommand{\ket}[1]{|#1\rangle}
\renewcommand{\bra}[1]{\langle#1|}
\DeclareMathOperator{\polylog}{polylog}
\DeclareMathOperator{\poly}{poly}
\newtheorem*{T1}{Theorem~\ref{th:1}}
\newtheorem*{L1}{Lemma~\ref{lm:1}'}

\definecolor{purple}{rgb}{.6,.1,.6}
\definecolor{darkgreen}{rgb}{.1,.6,.1}

\usepackage{tikz}
\usetikzlibrary{decorations.pathreplacing,calligraphy,decorations.markings}

\definecolor{tensor}{rgb}{0.5,0.8,0.5}
\definecolor{isometry}{rgb}{0.8,0.8,1}
\definecolor{unitary}{rgb}{0.8,0.5,.5}
\definecolor{gate}{rgb}{1.0,1.0,1.0}

\newcommand{\ATensor}[2]{
	\begin{scope}[shift={(#1)}]
		\draw (-1,0) -- (1,0);
		\draw (0,1) -- (0,0);
		\filldraw[fill=tensor] (-1/2,-1/2) -- (-1/2,1/2) -- (1/2,1/2) -- (1/2,-1/2) -- (-1/2,-1/2);
		\draw (0,0) node {\scriptsize #2};
	\end{scope}
}
\newcommand{\ADaggTensor}[2]{
	\begin{scope}[shift={(#1)}]
		\draw (-1,0) -- (1,0);
		\draw (0,-1) -- (0,0);
		\filldraw[fill=tensor,shift={(0,0)}] (-1/2,-1/2) -- (-1/2,1/2) -- (1/2,1/2) -- (1/2,-1/2) -- (-1/2,-1/2);
		\draw (0,0) node {\scriptsize #2};
	\end{scope}
}
\newcommand{\PTensor}[1]{
	\begin{scope}[shift={(#1)}]
		\draw (-1.5,0) -- (1.5,0);
		\draw (-.6, 0) -- (-.6, 1);https://www.overleaf.com/project/6464b177471e6194f64cb571
		\draw ( .6, 0) -- ( .6, 1);
		\filldraw[fill=tensor] (-1,-1/2) -- (-1,1/2) -- (1,1/2) -- (1,-1/2) -- (-1,-1/2);
		\draw (0,0) node {\scriptsize $P$};
	\end{scope}
}
\newcommand{\BTensor}[2]{
	\begin{scope}[shift={(#1)}]
		\draw (-1,0) -- (1,0);
		\foreach \x in {0,1,...,3}{
			\draw[shift={(-.3+0.2*\x,0)}] (0,1) -- (0,0);
		}
		\filldraw[fill=tensor] (-1/2,-1/2) -- (-1/2,1/2) -- (1/2,1/2) -- (1/2,-1/2) -- (-1/2,-1/2);
		\draw (0,0) node {\scriptsize #2};
	\end{scope}
}
\newcommand{\isometry}[2]{
	\begin{scope}[shift={(#1)}]
		\foreach \x in {0,1,...,3}{
			\draw[shift={(-.6+0.4*\x,0)}] (0,1) -- (0,0);
		}
		\foreach \x in {0,1,...,1}{
			\draw[shift={(-.6+1.2*\x,0)}] (0,-1) -- (0,0);
		}
		\filldraw[fill=isometry] (-1,-1/2) -- (-1,1/2) -- (1,1/2) -- (1,-1/2) -- (-1,-1/2);
		\draw (0,0) node {\scriptsize #2};
	\end{scope}
}
\newcommand{\unitary}[1]{
	\begin{scope}[shift={(#1)}]
		\def\xscale{2}
		\foreach \x in {0,1,...,3}{
			\draw[shift={(-.6*\xscale+0.4*\x*\xscale,0)}] (0,1) -- (0,-1);
		}
		\filldraw[fill=unitary] (-\xscale,-1/2) -- (-\xscale,1/2) -- (\xscale,1/2) -- (\xscale,-1/2) -- (-\xscale,-1/2);
		\draw (0,0) node {\scriptsize$U$};
		\draw (-.2*\xscale,-1.3) node {\scriptsize$0$};
		\draw (+.2*\xscale,-1.3) node {\scriptsize$0$};
	\end{scope}
}


\newcommand\subsetsim{\mathrel{%
  \ooalign{\raise0.2ex\hbox{$\subset$}\cr\hidewidth\raise-0.8ex\hbox{\scalebox{0.9}{$\sim$}}\hidewidth\cr}}}

\newcommand{\PEmptyTensor}[1]{
	\begin{scope}[shift={(#1)}]
		\draw (-1.5,0) -- (1.5,0);
		\draw (-.6, 0) -- (-.6, 1);
		\draw ( .6, 0) -- ( .6, 1);
		\filldraw[fill=tensor] (-1,-1/2) -- (-1,1/2) -- (1,1/2) -- (1,-1/2) -- (-1,-1/2);
	\end{scope}
}

\newcommand{\BDaggTensor}[2]{
	\begin{scope}[shift={(#1)}]
		\draw (-1,0) -- (1,0);
		\foreach \x in {0,1,...,3}{
			\draw[shift={(-.3+0.2*\x,-1)}] (0,1) -- (0,0);
		}
		\filldraw[fill=tensor] (-1/2,-1/2) -- (-1/2,1/2) -- (1/2,1/2) -- (1/2,-1/2) -- (-1/2,-1/2);
		\draw (0,0) node {\scriptsize #2};
	\end{scope}
}

\newcommand{\PDaggTensor}[2]{
	\begin{scope}[shift={(#1)}]
		\draw (-1.5,0) -- (1.5,0);
		\draw (-.6, 0) -- (-.6, -1);
		\draw ( .6, 0) -- ( .6, -1);
		\filldraw[fill=tensor] (-1,-1/2) -- (-1,1/2) -- (1,1/2) -- (1,-1/2) -- (-1,-1/2);
		\draw (0,0) node {\scriptsize $#2$};
	\end{scope}
}

\newcommand{\BTensorDense}[2]{
	\begin{scope}[shift={(#1)}]
		\draw (-1,0) -- (1,0);
		\foreach \x in {0,1,...,4}{
			\draw[shift={(-.35+0.175*\x,0)}] (0,1) -- (0,0);
		}
		\filldraw[fill=tensor] (-1/2,-1/2) -- (-1/2,1/2) -- (1/2,1/2) -- (1/2,-1/2) -- (-1/2,-1/2);
		\draw (0,0) node {\scriptsize #2};
	\end{scope}
}

\newcommand{\isometryDense}[2]{
	\begin{scope}[shift={(#1)}]
		\foreach \x in {0,1,...,4}{
			\draw[shift={(-.7+0.35*\x,0)}] (0,1) -- (0,0);
		}
		\foreach \x in {0,1,...,1}{
			\draw[shift={(-.6+1.2*\x,0)}] (0,-1) -- (0,0);
		}
		\filldraw[fill=isometry] (-1,-1/2) -- (-1,1/2) -- (1,1/2) -- (1,-1/2) -- (-1,-1/2);
		\draw (0,0) node {\scriptsize #2};
	\end{scope}
}


\begin{document}
\title{Preparation of matrix product states with log-depth quantum circuits}
\author{Daniel Malz}
\thanks{These authors (listed alphabetically) contributed equally to this work.}
\affiliation{Department of Mathematical Sciences, University of Copenhagen, Universitetsparken 5, 2100 Copenhagen, Denmark}

\author{Georgios Styliaris}
\thanks{These authors (listed alphabetically) contributed equally to this work.}
\affiliation{
Max-Planck-Institut f{\"{u}}r Quantenoptik, Hans-Kopfermann-Str. 1, 85748 Garching, Germany
}%
\affiliation{
Munich Center for Quantum Science and Technology (MCQST), Schellingstr. 4, 80799 M{\"{u}}nchen, Germany
}%

\author{Zhi-Yuan Wei}
\thanks{These authors (listed alphabetically) contributed equally to this work.}
\affiliation{
Max-Planck-Institut f{\"{u}}r Quantenoptik, Hans-Kopfermann-Str. 1, 85748 Garching, Germany
}
\affiliation{
Munich Center for Quantum Science and Technology (MCQST), Schellingstr. 4, 80799 M{\"{u}}nchen, Germany
}%

\author{J.~Ignacio Cirac}
\affiliation{
Max-Planck-Institut f{\"{u}}r Quantenoptik, Hans-Kopfermann-Str. 1, 85748 Garching, Germany
}%
\affiliation{
Munich Center for Quantum Science and Technology (MCQST), Schellingstr. 4, 80799 M{\"{u}}nchen, Germany
}%

\date{\today}

\begin{abstract}
    We consider the preparation of matrix product states (MPS) on quantum devices via quantum circuits of local gates.
    We first prove that faithfully preparing translation-invariant normal MPS of $N$ sites requires a circuit depth $T=\Omega(\log N)$.
    We then introduce an algorithm based on the renormalization-group transformation to prepare normal MPS with an error $\epsilon$ in depth $T=O(\log (N/\epsilon))$, which is optimal.
    We also show that measurement and feedback leads to an exponential speedup of the algorithm, to $T=O(\log\log (N/\epsilon))$.
    Measurements also allow one to prepare arbitrary translation-invariant MPS, including long-range non-normal ones, in the same depth.
    Finally, the algorithm naturally extends to inhomogeneous MPS.
\end{abstract}
\maketitle

One of the most important tasks in many-body physics and quantum information science is the preparation of useful or relevant states.
This has spurred a large effort to find ways to prepare states, for example, adiabatically~\cite{Albash2018}, dissipatively~\cite{Kraus2008,Verstraete2009}, or using quantum circuits.
A natural class of states to consider are matrix product states (MPS), because they efficiently approximate ground states of gapped local Hamiltonians~\cite{fannes1992finitely,Verstraete2006,Hastings2007}.
Moreover, many paradigmatic states can neatly be expressed as MPS, such as the cluster~\cite{Briegel2001}, GHZ~\cite{greenberger1989going}, $W$~\cite{dur2000three} and AKLT states~\cite{Affleck1987, Affleck1988}.

Several ways are known to prepare MPS.
Using unitary quantum circuits with strictly local gates,
all MPS can be prepared using a sequential quantum circuit
of depth $T\propto N$~\cite{Schoen2005}.
This is provably optimal for long-range correlated states such as the GHZ state~\cite{bravyi2006lieb}. However, for so-called normal MPS \cite{perez2007matrix}, which have short-range correlations, shorter depths are possible. Indeed, when allowing for a small error $\eps$, they can be obtained by acting on a product state with a constant-depth circuit of quasilocal gates---gates whose support grows (poly-)logarithmically with system size~\cite{Brandao2019,Piroli2021}.
However, such quasilocal gates have to be compiled into gates with strictly local support, and in the worst case such a compilation leads to circuits with a depth scaling exponentially in the support, and thus as $\poly(N)$.
However, since normal MPS all lie in the topologically trivial phase, one can construct adiabatic paths with a guaranteed gap~\cite{Schuch2011}, which means normal MPS can provably be prepared adiabatically in $T=O(\polylog (N/\eps))$~\cite{Ge2016} (also see~\cite{Bachmann2018}).

Despite of these results, it remains unclear if the scaling of the state-of-the-art algorithm~\cite{Ge2016} is optimal, or if there exist even faster algorithms to prepare normal MPS.
Proving optimality requires finding a tight lower bound on the depth, or, equivalently, its complexity, which is believed to be difficult in general~\cite{bravyi2006lieb,brandao2021models,jia2023hay}.

Here we first resolve the question of asymptotically optimal preparation of normal translation-invariant (TI) MPS. We prove that any circuit faithfully preparing them requires a depth $T=\Omega(\log N )$, i.e., it has to scale at least logarithmically with $N$. We then introduce an algorithm that saturates this bound and prepares all normal TI-MPS in a circuit depth
\begin{equation} \label{dep_normal}
    T=O(\log (N/\eps))
\end{equation}
using strictly local gates.
This is asymptotically faster than the previously fastest known algorithm (adiabatic preparation~\cite{Ge2016}) and also asymptotically optimal.
Moreover, the algorithm naturally extends to inhomogeneous MPS that are suitably short-range correlated.

\begin{figure}[t]
		\centering
    \begin{gather*}
        \begin{array}{c}
		\begin{tikzpicture}[scale=.4,thick,baseline={([yshift=-6ex]current bounding box.center)}]
			\foreach \x in {0,1,...,2}{
                \BTensorDense{1.5*\x,0,0}{}
			}
       \draw[dotted] (-1,  0) -- (-2, 0);
       \draw[dotted] (4,  0) -- (5, 0);
			\draw (-1.5, 1.5) node {\textbf{(a)}};
   \end{tikzpicture}
   \end{array}
   \approx
    \begin{array}{c}
			\begin{tikzpicture}[scale=.4,thick]
				\foreach \x in {0,1,...,1}{
                    \isometryDense{3.5*\x,1}{$V$}
					\draw (0.6+3.5*\x, 0) -- (3.5*\x + 2.9, 0);
					\filldraw[color=black, fill=white, thick](3.5*\x+1.75, 0) circle (0.3);
				}
                    \isometryDense{7,1}{$V$}
					\draw (0.6+7, 0) -- (7 + 2.4, 0);
					\filldraw[color=black, fill=white, thick](7+1.75, 0) circle (0.3);
					\draw (-1, 0) -- (-0.6, 0);
				\draw[dotted] (-1,  0) -- (-1.8, 0);
				\draw[dotted] (9.4,  0) -- (10.2, 0);
        \end{tikzpicture}
        \end{array}
\\
      	\begin{array}{c}
		\begin{tikzpicture}[scale=.4,thick,baseline={([yshift=-9ex]current bounding box.center)}]
		\foreach \x in {0,1,...,7}{
			\draw[shift={(-.7+0.35*\x,0)}] (0,1) -- (0,0);
		}
		\foreach \x in {0,1,...,1}{
			\draw[shift={(-.6+2.2*\x,0)}] (0,-1) -- (0,0);
		}
		\filldraw[fill=isometry] (-1,-1/2) -- (-1,1/2) -- (2,1/2) -- (2,-1/2) -- (-1,-1/2);
		\draw (0.5,0) node {\scriptsize $V$};
			\draw (-.8, 2) node {\textbf{(b)}};
		\end{tikzpicture}
        \end{array}
        =
        \begin{array}{c}
		\begin{tikzpicture}[scale=.4,thick,baseline={([yshift=-10ex]current bounding box.center)}]
  	\foreach \x in {0,1,...,1}{
					\begin{scope}[shift={(3*\x, 0)}]
						\draw (-1, 2);
						\draw (-0,0.5) -- (-0,0);
						\draw (+2/3,0.5) -- (+2/3,-2/3);
						\draw (+4/3,0.5) -- (+4/3,-2/3);
						\draw (+2,0.5) -- (+2,0);
                        \draw (2/3,-1.05) node {\scriptsize$0$};
                        \draw (4/3,-1.05) node {\scriptsize$0$};
      \filldraw[fill=unitary] (-1/3,-1/2) -- (-1/3,0) -- (2.0+1/3,0) -- (2.0+1/3,-1/2) -- (-1/3,-1/2);
					\end{scope}
				}
				\draw (0, -1.5) -- (0, -0.5);
				\draw[shift={(2, 0)}] (0, -1.5) -- (0, -0.5);
				\draw[shift={(3, 0)}] (0, -1.5) -- (0, -0.5);
				\draw[shift={(5, 0)}] (0, -1.5) -- (0, -0.5);
                \draw (0.0, -1.5) -- (0.0, -2.5);
				\draw (5, -1.5) -- (5, -2.5);
                \draw (2, -1.5) -- (2, -2.2);
				\draw (3, -1.5) -- (3, -2.2);
                \draw (2,-2.6) node {\scriptsize$0$};
                \draw (3,-2.6) node {\scriptsize$0$};
				\filldraw[fill=unitary] (-1/3,-2) -- (-1/3,-1.5) -- (5+1/3,-1.5) -- (5+1/3,-2) -- (-1/3,-2);
		\end{tikzpicture}
        \end{array}
\quad \;
\begin{array}{c}
		\begin{tikzpicture}[scale=.4,thick,baseline={([yshift=-9ex]current bounding box.center)}]
  					\draw (1.0, -0.75) -- (6.0, -.75);
					\draw (1.75, -0.75) -- (1.75, 0);
					\draw (5.25, -0.75) -- (5.25, 0);
					\draw (1.0+3.5*0, 0) -- (3.5*0 + 2.9, 0);
                    \draw[dotted] (0.0+3.5*0, 0) -- (1.0 + 3.5*0, 0);
					\filldraw[color=black, fill=white, thick](3.5*0+1.75, 0) circle (0.3);
					\filldraw[color=black, fill=black, thick](3.5*0+1.75, -.75) circle (0.1);
					\draw (0.6+3.5*1, 0) -- (3.5*1 + 2.5, 0);
                    \draw[dotted] (2.5+3.5*1, 0) -- (3.5*1 + 3.5, 0);
     \isometryDense{3.5*1,1}{$V$}
					\filldraw[color=black, fill=white, thick](3.5*1+1.75, 0) circle (0.3);
					\filldraw[color=black, fill=black, thick](3.5*1+1.75, -.75) circle (0.1);
    		          \draw[dotted] (0,  -.75) -- (1, -.75);
                    \draw[dotted] (6.0,  -.75) -- (7.0, -.75);
			\draw (0, 2) node {\textbf{(c)}};
        \filldraw[color=black, fill=white, thick](2.5, -0.3) rectangle (4.5, -1.2);
        \draw (3.5,-0.75) node {\scriptsize GHZ};
    \end{tikzpicture}
        \end{array}
    \end{gather*}
    \caption{Algorithm for MPS preparation. \textbf{(a)} After blocking, we approximate the state through its RG fixed point (nearest-neighbor entangled pairs for normal tensors [\cref{eq:normal_fp_local}]) combined with isometries $V$ that encode the local structure of the state. \textbf{(b)} We use RG to construct an efficient circuit for $V$, which can be further expressed with a low-depth circuit of local gates. \textbf{(c)} Our algorithm extends to non-normal (i.e., long-range correlated) MPS with GHZ-like fixed points [\cref{eq:fp_general_ti}] depicted here. Using quantum circuits assisted by measurements, both the fixed-point states and the isometry $V$ can be implemented efficiently.}\label{figure}\end{figure}
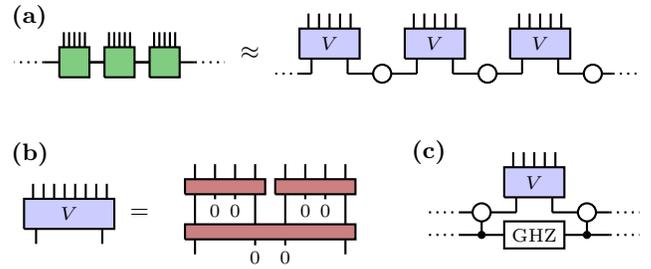

If one has additionally access to measurements and feedback, it is known that MPS can be prepared exactly in a depth $T=O(\log N)$ by expressing them in terms of 
the multiscale entanglement renormalization ansatz (MERA)~\cite{Lu2022}.
Including measurements also yields a speedup for our algorithm, and allows us to extend it to non-normal MPS, such that all TI-MPS can provably be prepared in depth
\begin{equation} \label{dep_meas}
    T=O(\log\log (N/\eps)).
\end{equation}
This is exponentially faster than the best known measurement assisted protocol~\cite{Lu2022}.
It also shows that our lower bound can be violated with access to measurements.
As a byproduct, our work also proves that the finite-range MERA~\cite{Evenbly2011} can approximate normal TI-MPS in $O(\log \log (N/\eps))$ layers.

Our algorithm fundamentally builds on the renormalization-group (RG) transformation.
The RG procedure consists of blocking neighboring sites and discarding short-range correlations. After consecutive RG transformations, the state asymptotically converges to its fixed-point state~\cite{Verstraete2005}, which has only nearest-neighbor correlations for normal TI-MPS~\cite{Verstraete2005,cirac2017matrix}.
This happens rapidly since it suffices to block only $O(\log (N/\eps))$ sites to approximate well the fixed-point state~\cite{Piroli2021}.
Our algorithm first prepares this fixed point, and subsequently reintroduces the short-range correlations by applying an isometry of support $\log (N/\eps)$ [cf.~\cref{figure}(a)].
Our key contribution is that we can prove through an explicit construction (inspired by earlier works~\cite{Lu2022,Schoen2005,Verstraete2005}) that this isometry can be implemented with a strictly local circuit of depth $T=O(\log (N/\eps))$ [cf.~\cref{figure}(b)]. When assisted by measurements [cf.~\cref{figure}(c)], the depth of the isometry can be further reduced, while the GHZ-like fixed point of long-range correlated MPS can be prepared in constant depth~\cite{briegel2001persistent,Piroli2021}. Together, this lead to the circuit depth $T=O(\log\log (N/\eps))$ to prepare almost arbitrary (including all TI) MPS.

\emph{Preliminaries.---}%
For simplicity, we first consider (normalized) TI-MPS,
\begin{align}
	\ket{\phi_N} \propto \sum_{i_1, \ldots, i_N=1}^d\Tr\left( A^{i_1}\cdots A^{i_N} \right)\ket{i_1\cdots i_N},
	\label{eq:TI-MPS}
\end{align}
and later extend to the inhomogeneous case. Above $A^i$ are $D\times D$ matrices ($D$ is the bond dimension) with $i=1,\dots,d$ (physical dimension).
We will extensively use graphical notation and identify
${(A^i)_{jk} =
\begin{array}{c}
    \begin{tikzpicture}[scale=.4, baseline={([yshift=-5.5ex]current bounding box.center)}, thick]
    	\ATensor{0,0}{$A$}
    	\draw (0,1.35) node {\scriptsize $i$};
    	\draw (1.35,0) node {\scriptsize $k$};
    	\draw (-1.35,0) node {\scriptsize $j$};
    \end{tikzpicture}
\end{array}}.$

To each tensor $A$ we associate its \textit{transfer matrix}
\begin{equation}
	E_A = \sum_{i=1}^d (A^{i})^* \otimes A^i
	=
  	\begin{array}{c}
		\begin{tikzpicture}[scale=.5,thick,baseline={([yshift=1ex]current bounding box.center)}]
			\ATensor{0,0}{$A$}
			\ADaggTensor{0,1.7}{$A^*$}
		\end{tikzpicture}
	\end{array} .
	\label{eq:transfer_matrix}
\end{equation}
A tensor is called \textit{normal}, if (i) it is irreducible ($A^i$ have no nontrivial common invariant subspace), and (ii)  $E_A$ has a unique largest eigenvalue $\lambda_1 = 1$ and no other of the same magnitude~\cite{fannes1992finitely,perez2007matrix}. 
Its correlation length is defined via the subleading eigenvalue $\xi = -1/\ln(|\lambda_2|)$.
After a gauge transformation~\cite{perez2007matrix}, $E_A$ of a normal tensor can be brought into the form
\begin{equation}
	E_A = \ket{\rho}\bra{\1} + R=
  	\begin{array}{c}
		\begin{tikzpicture}[scale=.4,thick,baseline={([yshift=1ex]current bounding box.center)}]
			\draw[shift={(-0.3, 0)}] (-1, 1) -- (-0.5, 1) -- (-0.5, -1) -- (-1, -1);
			\draw[shift={(-0.2, 0)}] (+1, 1) -- (+0.5, 1) -- (+0.5, -1) -- (+1, -1);
			\filldraw[color=black, fill=white, thick](-0.8, 0) circle (0.6);
			\draw (-0.8, 0) node {\scriptsize $\rho$};
		\end{tikzpicture}
    \end{array}
    + R
	,
	\label{eq:Ek_decomp}
\end{equation}
where the leading right eigenvector $\rho > 0 $ (Hermitian and positive definite)~\cite{evans1977spectral,perez2007matrix}, $\langle \1|\rho\rangle = \Tr (\rho) =  1$, and $R$ has spectral radius less than 1.

Blocking $q$ sites together yields a new tensor $B$
\begin{equation}
  	\begin{array}{c}
		\begin{tikzpicture}[scale=.45,thick,baseline={([yshift=-3ex]current bounding box.center)}]
			\BTensor{0,0}{$B$}
		\end{tikzpicture}
	\end{array}
	= 
  	\begin{array}{c}
	\begin{tikzpicture}[scale=.45, baseline={([yshift=5.5ex]current bounding box.center)}, thick]
		\draw[shift={(0,0)},dotted] (0,0) -- (4,0);
		\ATensor{0,0}{$A$}
		\ATensor{4,0}{$A$}
		\draw [decorate,
    	decoration = {calligraphic brace,mirror}] (0,-0.8) --  (4,-0.8);
		\draw (2,-1.5) node {\scriptsize $q$};
	\end{tikzpicture}
	\end{array}
	\label{eq:B}
\end{equation}
with physical dimension $d^q$, the same bond dimension $D$, and transfer matrix $E_B = E_A^q$.
$E_B$ approaches  its fixed point in the limit $q \to \infty$~\cite{Verstraete2005,cirac2017matrix} which, for normal tensors, is $E_{\infty} = \ket{\rho}\bra{\1}$. 

Our goal is to devise an algorithm that approximates the target $N$-site MPS $\ket{\phi_N}$ by $| \widetilde \phi_N \rangle$
with error $\eps = \eps(\phi_N, \widetilde\phi_N)$, where
\begin{equation}
    \eps(\phi,\psi) = 1-|\bra\phi\psi\rangle|
\end{equation}
and $| \widetilde \phi_N \rangle$ is prepared using a local quantum circuit.
Our first result is that it is impossible to approximate well normal TI-MPS in depth $o(\log N)$. Subsequently, we provide an explicit algorithm with the asymptotically optimal depth $O(\log (N/\eps))$.

\emph{Lower bound.---}%
Given (i) $\{\ket{\phi_N}\}$,
    a sequence of normalized TI-MPS on $N$ sites, generated by a normal tensor $A$, with finite correlation length $\xi>0$ [cf.~\cref{eq:TI-MPS}],
and (ii) $\{\ket{\psi_N}\}$, 
    a sequence obtained from depth-$T$ local quantum circuits applied to product states,
we are interested in determining how fast $T$ has to grow in order to approximate the MPS well, as measured by the error $\eps=\eps(\phi_N,\psi_N)$.
We prove here that no quantum circuit with depth $T=o(\log N)$ can faithfully approximate this class.
\begin{theorem} 
\label{th:1}
    If $T=o(\log N)$ there is some $N_0$ such that for all $N>N_0$ we have $\eps >1/2$.
\end{theorem}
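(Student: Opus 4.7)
My strategy is to combine the light-cone structure of depth-$T$ local circuits acting on product inputs with the algebraic structure of normal MPS transfer matrices, then convert the resulting discrepancy into a fidelity bound via Fuchs--van de Graaf.

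On the circuit side, for $\ket{\psi_N}=U\ket{0^N}$ with $U$ a depth-$T$ brickwork circuit, the Lieb--Robinson bound for finite-depth circuits implies that $U\dagg O_A U$ and $U\dagg O_B U$ have disjoint supports whenever $d(A,B)>2T$. Combined with the product input, this forces the exact factorization $\rho_{AB}^{\psi_N}=\rho_A^{\psi_N}\otimes\rho_B^{\psi_N}$. Thus depth-$T$ circuit states are \emph{exactly} uncorrelated across any buffer of length $>2T$.

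On the MPS side, I would use $E_A=\ket{\rho}\bra{\1}+R$ with $|\lambda_2|<1$ to compute $\rho_{AB}^{\phi_N}$ for regions $A,B$ separated by a buffer of length $\ell$. The transfer-matrix expansion shows that the connected contribution to this marginal is governed by $R^\ell$ and, for appropriate $A,B$, has trace-norm magnitude bounded below by $c\,|\lambda_2|^\ell=c\,e^{-\ell/\xi}$. Fuchs--van de Graaf together with monotonicity of trace distance under partial trace gives $\|\rho_{AB}^{\phi_N}-\rho_A^{\psi_N}\otimes\rho_B^{\psi_N}\|_1\le 2\sqrt{\eps(2-\eps)}$; comparing this with the MPS lower bound on the connected part is what a contradiction would rest on.

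Applied to a single pair $(A,B)$, however, this line of reasoning gives only a \emph{constant} lower bound $T\gtrsim\xi$, independent of $N$, so producing the asserted $\Omega(\log N)$ scaling requires an additional amplification step. This amplification is the central technical obstacle. Two natural routes to try are (i) to pick an $N$-dependent global observable---for instance a carefully constructed TI-protected string or collective operator---whose expectation in $\phi_N$ cannot be matched without circuit correlations of range $\log N$, or (ii) to aggregate the marginal discrepancies across the $\sim N$ translates of $(A,B)$ via a subadditive information-theoretic quantity so the combined signal scales with $N$. Either route exploits the fact that the gap $|\lambda_2|<1$ enters exponentially: matching a signal that is polynomially small in $N$ then forces $2T\gtrsim \log N/\log(1/|\lambda_2|)$, contradicting $T=o(\log N)$ for all sufficiently large $N$ and yielding $\eps>1/2$ via the fidelity bound above.
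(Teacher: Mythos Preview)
Your diagnosis of the single-pair argument is correct: one region pair gives only $T\gtrsim\xi$, and you rightly flag amplification as the crux. But the proposal stops precisely there---it names two vague routes without executing either, and neither works as stated. Route (i) (a global observable) would need an explicit construction together with a lower bound on its expectation in $\phi_N$ that scales polynomially in $1/N$; no such observable is exhibited. Route (ii) (aggregate marginal discrepancies) fails in the form you suggest because the reduced state of $\phi_N$ on many disjoint regions is \emph{not} a product state---its deviation from product is itself exponentially small in the buffer length, of the same order as the correlation signal you want to accumulate. So a naive sum of trace-distance discrepancies does not add up; the errors and the signal are comparable and you cannot separate them.

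The paper resolves this by first replacing $\phi_N$ with the RG-approximated state $\widetilde\phi_N$ from \cref{lm:1} (blocking $q\propto\log N$ sites), which is $O(N^{-\eta})$-close to $\phi_N$ but has the crucial property that after tracing out every other block of size $2q$ it becomes an \emph{exact} product $\rho_0^{\otimes k}$ with $k\sim N/\log N$. The circuit state, traced the same way, is also an exact product $\sigma_1\otimes\cdots\otimes\sigma_k$ once $q>T$ by your light-cone observation. Now multiplicativity of the Uhlmann fidelity applies cleanly: $F(\rho,\sigma)\le(1-\delta)^{k/2}$ with $\delta\sim e^{-4T/\xi}$ coming from the per-block correlation mismatch (your \cref{lm:2}-type argument). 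Requiring $(1-\delta)^{k/2}$ to stay away from $1$ forces $k\delta\gtrsim 1$, i.e., $N/\log N\gtrsim e^{4T/\xi}$, which fails for $T=o(\log N)$. The step you are missing is this passage through $\widetilde\phi_N$: it converts ``approximately product with exponentially small corrections'' into ``exactly product,'' which is what makes the amplification rigorous.
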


The proof can be found in~\cite{sm}.
To establish this result, we use the fact that $\ket{\psi_N}$ have a strictly finite light cone, whereas in a normal TI-MPS $\ket{\phi_N}$ correlation functions decay only exponentially. 
This leads to a mismatch in the expectation value of correlators outside the light cone, which gives a lower bound on the error between the two states.
We additionally use the fact that sufficiently distant parts of the system are statistically independent, such that the error accumulates with increasing system size $N$, unless the circuit depth grows sufficiently quickly.

\emph{The algorithm.---}%
We now present the key steps for our algorithm. We will (i) approximate $\ket{\phi_N}$ by $\ket{\widetilde \phi_N}$, then (ii) show that $\ket{\widetilde \phi_N}$ can be efficiently prepared, and (iii) prove that the approximation error decays sufficiently fast with $N$. We begin with the case of normal TI-MPS and return to the general case later.

\emph{Approximation through the fixed-point state.---}To make the approximation, we follow the steps of the RG transformation~\cite{Verstraete2005}.
After blocking $q$ sites, we perform a polar decomposition on the blocked tensor $B$, interpreting it as a map from the $D^2$-dimensional virtual space to the $d^q$-dimensional physical space\footnote{To do a polar decomposition we need $q$ large enough that $d^q\geq D^2$.
Later, we will have $q$ scaling with the system size, so here we assume that $B$ is injective~\cite{perez2007matrix}.
This is always true for normal tensors after blocking finite (independent of $N$) sites~\cite{Sanz2010}.}.
This way we can write $B = V P$ where $V$ is an isometry with $V^\dagger V = \1_{D^2}$ and $P > 0$ is positive definite.
Thus
\begin{align} \label{eq:B_TM}
    E_B = 
    \begin{array}{c}
		\begin{tikzpicture}[scale=.39,thick,baseline={([yshift=1ex]current bounding box.center)}]
			\BTensor{0,0}{$B$}
			\BDaggTensor{0,1.7}{$B^*$}
		\end{tikzpicture}
	\end{array}
  =
      \begin{array}{c}
		\begin{tikzpicture}[scale=.39,thick,baseline={([yshift=1ex]current bounding box.center)}]
			\PTensor{0,0}
			\PDaggTensor{0,1.7}{P^*}
		\end{tikzpicture}
	\end{array}
 \xrightarrow{q \to \infty}
       \begin{array}{c}
		\begin{tikzpicture}[scale=.4,thick,baseline={([yshift=1ex]current bounding box.center)}]
			\PEmptyTensor{0,0}
			\PDaggTensor{0,1.7}{P_{\infty}^*}
		\draw (0,0) node {\scriptsize $P_{\infty}$};
		\end{tikzpicture}
	\end{array}
 =   
 		\begin{tikzpicture}[scale=.39,thick,baseline={([yshift=-0.52ex]current bounding box.center)}]
			\draw[shift={(-0.3, 0)}] (-1, 1) -- (-0.5, 1) -- (-0.5, -1) -- (-1, -1);
			\draw[shift={(-0.2, 0)}] (+1, 1) -- (+0.5, 1) -- (+0.5, -1) -- (+1, -1);
			\filldraw[color=black, fill=white, thick](-0.8, 0) circle (0.6);
			\draw (-0.8, 0) node {\scriptsize $\rho$};
		\end{tikzpicture}
  .
\end{align}

The approximation consists of replacing $P$ by its fixed-point version $P_\infty$ in the tensor $B$, while keeping the isometry $V$ intact.
Graphically,
\begin{equation}
  	\begin{array}{c}
		\begin{tikzpicture}[scale=.5,thick]
			\BTensor{0,0}{$B$}
		\end{tikzpicture}
	\end{array}
	=
  	\begin{array}{c}
		\begin{tikzpicture}[scale=.4,thick]
			\PTensor{0,0}
			\isometry{0,1.5}{$V$}
		\end{tikzpicture}
	\end{array}
	\approx
  	\begin{array}{c}
		\begin{tikzpicture}[scale=.4,thick]
					\draw (-2.4, 0) -- (-0.6, 0);
					\filldraw[color=black, fill=white, thick](-1.5, 0) circle (0.6);
                    \draw (0.6, 0) -- (1.8, 0);
			\isometry{0 ,1}{$V$}
			\draw (-1.6,0) node {\scriptsize $\sqrt{\rho}$};
		\end{tikzpicture}
    \end{array}
	=
  	\begin{array}{c}
		\begin{tikzpicture}[scale=.5,thick]
			\BTensor{0,0}{$\widetilde B$}
		\end{tikzpicture}
	\end{array} .
	\label{eq:key_approximation}
\end{equation}
Later we will assign meaning to the approximation sign in \cref{eq:key_approximation} by bounding the global error between the MPS $\ket{\phi_N}$ and $| \widetilde \phi_N \rangle$ resulting from the two tensors, $B$ and $\widetilde B$. To obtain a vanishing error in the thermodynamic limit we will need $q \propto \log N$, which we assume for now and justify subsequently.

\emph{Preparing the approximate state.---}The approximate state $|\widetilde\phi_N\rangle$ can be prepared by acting on the fixed-point state with a product of unitaries of support $q$ (for simplicity $D = d$ in the illustration)
\begin{equation} \label{eq:phi_tilde}
		\begin{aligned}
    	|\widetilde \phi_N \rangle = \left(\bigotimes_{i=1}^{N/q} U_i\right)\bigotimes_{i = 1} ^{N/q} \left( \ket{\omega}_{R_i L_{i+1}} \ket{0 \dots 0}_{C_i} \right)\\
  		=
  		\begin{array}{c}
			\begin{tikzpicture}[scale=.5,thick]
				\foreach \x in {0,1,...,1}{
					\unitary{5*\x, 1}
					\draw (1.2+5*\x, 0) -- (5*\x + 3.8, 0);
					\filldraw[color=black, fill=white, thick](5*\x+2.5, 0) circle (0.3);
				}
				\draw (-1.2, 0) -- (-2.1, 0);
				\draw[dotted] (9,  0) -- (10, 0);
				\draw[dotted] (-2.2,  0) -- (-3.3, 0);
			\draw (-1.3,-1.2) node {\scriptsize $L_i$};
   			\draw (0.05,-1.2) node {\scriptsize $C_i$};
			\draw (1.5,-1.2) node {\scriptsize $R_i$};
   			\draw (3.7,-1.25) node {\scriptsize $L_{i+1}$};
   			\draw (5.15,-1.25) node {\scriptsize $C_{i+1}$};
   			\draw (6.75,-1.25) node {\scriptsize $R_{i+1}$};
			\end{tikzpicture}
    	\end{array}.
	\end{aligned}
\end{equation}
The unitary is constructed such that it implements the required isometry when acting on a product state\footnote{From dimension counting $D^2 d^\ell \ge d^q$ thus $\ell \sim q$.}  $\ket{0}^{\otimes \ell}$ over the ``central'' region ($\ell = 2$ in the illustration)
\begin{equation}
  	\begin{array}{c}
		\begin{tikzpicture}[scale=.4,thick,baseline={([yshift=3.8ex]current bounding box.center)}]
			\unitary{0,0}
		\end{tikzpicture}
    \end{array}
    =
  	\begin{array}{c}
		\begin{tikzpicture}[scale=.4,thick,baseline={([yshift=0ex]current bounding box.center)}]
			\isometry{0,0}{$V$}
		\end{tikzpicture}
    \end{array}.
	\label{eq:unitary}
\end{equation}
Note that for normal TI-MPS the fixed-point state $\ket{\Omega} = \otimes_{i=1}^{N/q} \ket{\omega}_{R_i L_{i+1}}$ is a tensor product of entangled pairs,
\begin{align} \label{eq:normal_fp_local}
   |\omega\rangle_{R_i L_{i+1}} =
    \begin{array}{c}
			\begin{tikzpicture}[scale=.45,thick,baseline={([yshift=4ex]current bounding box.center)}]
					\draw (-1, .5) -- (-1, 0) -- (1,0) -- (1,0.5);
					\filldraw[color=black, fill=white, thick](0, 0) circle (0.3);
			\draw (-.8,-0.7) node {\scriptsize $R_i$};
   			\draw (1.25,-0.75) node {\scriptsize $L_{i+1}$};
        \end{tikzpicture}
        \end{array} 
         \!\!\!\!\!  =
( \1 \otimes \sqrt{\rho}) \sum_{i=1}^D \ket{ii}_{R_i L_{i+1}}
\end{align}
each with support over the ``right'' and ``left'' Hilbert spaces of neighboring sites ($\dim R_i = \dim  L_{i+1} = D$). It can thus be prepared from a product state with a constant-depth circuit.

So far, it is not obvious that the resulting circuit can be expressed efficiently in terms of strictly local gates, because the unitaries in \cref{eq:unitary} are only \emph{quasilocal}, i.e., having support $q \propto \log N$.
While a naive bound on the circuit depth would be $\mathrm{poly}(N)$, here we use the fact that $U$ comes from an MPS to show that in reality it can be implemented in $T=O(q)$. We do this by providing two explicit and exact decompositions of $U$ in terms of gates with constant support, the ``sequential-RG'' and the ``tree-RG''.

\emph{The sequential-RG circuit.---}We can express the unitary in \cref{eq:phi_tilde} in terms of the original MPS by applying the inverse of $P$ to its virtual legs\footnote{The subsequent derivation remains valid also for non-injective tensors $B$. In that case $P^{-1}$ is understood as pseudo-inverse.},
\begin{align}
  	\begin{array}{c}
		\begin{tikzpicture}[scale=.4,thick,baseline={([yshift=4ex]current bounding box.center)}]
			\unitary{0,0}
		\end{tikzpicture}
    \end{array}
    =
  	\begin{array}{c}
		\begin{tikzpicture}[scale=.4,thick]
			\foreach \x in {0,1,...,3}{
				\ATensor{(1.5*\x,0)}{$A$}
			}
			\draw (-1, 0) -- (-1, -.8) -- (3.5, -.8) -- (3.5, -2.5);
			\draw (5.5, 0) -- (5.5, -.8) -- (4.5, -.8) -- (4.5, -2.5);
			\filldraw[fill=tensor, thick] (3, -1.1) -- (5, -1.1) -- (5, -2.1) -- (3, -2.1) -- (3, -1.1) ;
			\draw (4,-1.6) node {\scriptsize $P^{-1}$};
   			\draw (3.50, -2.5) -- (0, -2.5) -- (0, -2.8);
   			\draw (4.5, -2.1) -- (4.5, -2.8);
		\end{tikzpicture}
    \end{array}
    =
  	\begin{array}{c}
		\begin{tikzpicture}[scale=.4,thick,fill=tensor,baseline={([yshift=2ex]current bounding box.center)}]
			\draw (0,.1) -- (4.5,.1);
			\draw (0,-.1) -- (4.5,-.1);
			\foreach \x in {0,1,...,3}{
				\begin{scope}[shift={(1.5*\x,0)}]
					\draw (0,0) -- (0,1);
					\filldraw (-.5, -.5) -- (-.5,.5) -- (.5,.5) -- (.5,-.5) -- (-.5,-.5);
					\draw (0,0) node {\scriptsize $A'$};
				\end{scope}
			}
			\draw (4.4, 0) -- (4.4, -1);
			\draw (4.6, 0) -- (4.6, -1);
			\filldraw (4, -.5) -- (4,.5) -- (5,.5) -- (5,-.5) -- (4,-.5);
			\draw (4.5,0) node {\scriptsize $C$};
			\draw (4.4, -1) -- (0, -1) -- (0, -1.3);
			\draw (4.6, -0.5) -- (4.6, -1.3);
			\draw (-.5, 0.15 ) -- (-0.85, 0.15) -- (-0.85, -0.15) -- (-0.5, -0.15);
		\end{tikzpicture}
    \end{array},
	\label{eq:sequential_unitary}
\end{align}
where in the last step we set $A'^i = A^i \otimes \1_D$ and contracted $P^{-1}$ with the rightmost $A'$ to obtain $C$.
As in sequential preparation of MPS~\cite{Schoen2005} and in the left-canonical form~\cite{Schollwoeck2011}, we can now iteratively apply singular value decompositions, starting from the tensor on the left and moving right, but stopping before the last tensor\footnote{The method can easily be generalized to absorb $P^{-1}$ into any of the tensors.
With $C$ in the bulk, the sequential circuit is obtained by repeated SVD starting both left and right and stopping at $C$, as in the mixed canonical form~\cite{Schollwoeck2011}.}.
This defines a new set of tensors that describe the same isometry $V$ but now each tensor is a local isometry (arrows indicate isometry direction, $q=4$ in illustration)
\begin{align}
    V = V_{q} \dots V_1 =
  	\begin{array}{c}
		\begin{tikzpicture}[scale=.4,thick,fill=isometry,decoration={
		  		  	markings, mark=at position 0.75 with {\arrow{>}}}]
			\foreach \x in {1,...,2}{
				\draw[postaction={decorate}] (0,1) -- (0,.5);
				\draw[postaction={decorate}] (.5,0) -- (1,0);
				\filldraw[fill=isometry] (-.5, -.5) -- (-.5,.5) -- (.5,.5) -- (.5,-.5) -- (-.5,-.5);
				\draw (0,0) node {\scriptsize $V_4$};
				\begin{scope}[shift={(1.5*\x,0)}]
					\draw[postaction={decorate}] (0,1) -- (0,.5);
					\draw[postaction={decorate}] (.5,.1) -- (1,.1);
					\draw[postaction={decorate}] (.5,-.1) -- (1,-.1);
					\filldraw (-.5, -.5) -- (-.5,.5) -- (.5,.5) -- (.5,-.5) -- (-.5,-.5);
				\end{scope}
			\draw (1.5,0) node {\scriptsize $V_{3}$};
   			\draw (3,0) node {\scriptsize $V_{2}$};
			}
			\draw (4.5,1) -- (4.5,.5);
			\draw (4.4, 0) -- (4.4, -1);
			\draw (4.6, 0) -- (4.6, -1);
			\filldraw[fill=tensor] (4, -.5) -- (4,.5) -- (5,.5) -- (5,-.5) -- (4,-.5);
			\draw (4.5,0) node {\scriptsize $\widetilde C$};
		\end{tikzpicture}
    \end{array}
    , \quad V_i : \mathbb C^{D'_{i}} \to \mathbb C^{d D'_{i+1}} .
	\label{eq:sequential_svds}
\end{align}
with every $V_i$ an isometry $V_i^\dagger V_i = \1_{D'_i}$ satisfying $D'_i \le D^2$ ($D'_{q+1} = 1$).
Importantly, $\widetilde C = V_1$ is automatically also an isometry, as
\begin{equation}
    V\dagg V = 
  	\begin{array}{c}
		\begin{tikzpicture}[scale=.5,thick,fill=isometry]
			\draw (0,1) -- (0,.5);
			\draw (.5,0) -- (1,0);
			\filldraw (-.5, -.5) -- (-.5,.5) -- (.5,.5) -- (.5,-.5) -- (-.5,-.5);
			\draw (0,0) node {\scriptsize $V_4$};
			\foreach \x in {1,...,2}{
				\begin{scope}[shift={(1.5*\x,0)}]
					\draw (0,1) -- (0,.5);
					\draw (.5,.1) -- (1,.1);
					\draw (.5,-.1) -- (1,-.1);
					\filldraw (-.5, -.5) -- (-.5,.5) -- (.5,.5) -- (.5,-.5) -- (-.5,-.5);
				\end{scope}
			\draw (1.5,0) node {\scriptsize $V_{3}$};
			\draw (3,0) node {\scriptsize $V_{2}$};
			}
			\draw (4.5,1) -- (4.5,.5);
			\draw (4.4, 0) -- (4.4, -1);
			\draw (4.6, 0) -- (4.6, -1);
			\filldraw[fill=tensor] (4, -.5) -- (4,.5) -- (5,.5) -- (5,-.5) -- (4,-.5);
			\draw (4.5,0) node {\scriptsize $\widetilde C$};
			\begin{scope}[shift={(0, 1.5)}]
				\draw (.5,0) -- (1,0);
				\filldraw (-.5, -.5) -- (-.5,.5) -- (.5,.5) -- (.5,-.5) -- (-.5,-.5);
				\draw (0,0) node {\scriptsize $V_4^*$};
				\foreach \x in {1,...,2}{
					\begin{scope}[shift={(1.5*\x,0)}]
						\draw (.5,.1) -- (1,.1);
						\draw (.5,-.1) -- (1,-.1);
						\filldraw (-.5, -.5) -- (-.5,.5) -- (.5,.5) -- (.5,-.5) -- (-.5,-.5);
					\end{scope}
				}
				\draw (4.4, 0) -- (4.4, 1);
				\draw (4.6, 0) -- (4.6, 1);
				\filldraw[fill=tensor] (4, -.5) -- (4,.5) -- (5,.5) -- (5,-.5) -- (4,-.5);
				\draw (4.5,0) node {\scriptsize ${\widetilde C}^*$};
				\draw (1.5,0) node {\scriptsize $V_{3}^*$};
			    \draw (3,0) node {\scriptsize $V_{2}^*$};
			\end{scope}
		\end{tikzpicture}
    \end{array}
    =
  	\begin{array}{c}
		\begin{tikzpicture}[scale=.5,thick,fill=isometry]
			\draw (0, 0) -- (0, 1);
			\draw (0, -.1) -- (-.9, -.1) -- (-.9,1.6) -- (0, 1.6);
			\draw (0, .1) -- (-.7, .1) -- (-.7,1.4) -- (0, 1.4);
			\draw (.1, 0) -- (.1, -1);
			\draw (-.1, 0) -- (-.1, -1);
			\draw (.1, 1.5) -- (.1, 2.5);
			\draw (-.1, 1.5) -- (-.1, 2.5);
			\filldraw[fill=tensor] (-.5, -.5) -- (-.5,.5) -- (.5,.5) -- (.5,-.5) -- (-.5,-.5);
			\filldraw[fill=tensor,shift={(0,1.5)}] (-.5, -.5) -- (-.5,.5) -- (.5,.5) -- (.5,-.5) -- (-.5,-.5);
			\draw (0,0) node {\scriptsize $\widetilde C$};
			\draw (0,1.5) node {\scriptsize $\widetilde C^*$};
		\end{tikzpicture}
    \end{array}
    =
      	\begin{array}{c}
		\begin{tikzpicture}[scale=.4,thick,fill=isometry]
			\draw (-.2,-1.5) -- (-.2,1.5);
			\draw (.2,-1.5) -- (.2, 1.5);
		\end{tikzpicture}
    \end{array} .
	\label{eq:C_is_an_isometry}
\end{equation}
Since this sequential circuit comprises $q$ sites, its depth is $O(q)$. This scaling is unchanged if we additionally take into account that the inputs of the unitary in \cref{eq:sequential_unitary} are separated by $O(q)$ sites, which requires one to implement SWAP gates.

\emph{The tree-RG circuit.---}Blocking two neighboring sites followed by a polar decomposition is the basis for the real-space RG transformation and halves the correlation length~\cite{Verstraete2005,Levin2007,Vidal2007}.
Instead of directly blocking $q$ sites, we can repeatedly apply this transformation $k \sim \log_2 q$ times to the same effect (illustration below, and in \cref{figure}(b)).
This generates a treelike circuit with $k$ layers, in which each layer but the lowest consists of isometries from dimension $D^2$ to $D^4$ (below $A_k$ is obtained from $A$ by blocking $k$ sites, and $q=8$)
\begin{align}
\begin{array}{c}
    \begin{tikzpicture}[scale=.4,thick,baseline={([yshift=2ex]current bounding box.center)}]
        \foreach \x in {0,1,...,1}{
            \begin{scope}[shift={(3*\x, 0)}]
                \draw (-1,0) -- (2,0);
                \draw (-.1,1) -- (-.1,0);
                \draw (+.1,1) -- (+.1,0);
                \draw (+0.9,1) -- (+0.9,0);
                \draw (+1.1,1) -- (+1.1,0);
                \filldraw[fill=tensor] (-1/2,-1/2) -- (-1/2,1/2) -- (3/2,1/2) -- (3/2,-1/2) -- (-1/2,-1/2);
                \draw (0.5, 0) node {\scriptsize $A_4$};
            \end{scope}
        }
    \end{tikzpicture}
\end{array} 
=
\begin{array}{c}
    \begin{tikzpicture}[scale=.4,thick]
        \foreach \x in {0,1,...,1}{
            \begin{scope}[shift={(3*\x, 0)}]
                \PEmptyTensor{0.5,-1.5}
                \draw (0.25+0.25\x, -1.5) node {\scriptsize $P^{(1)}$};
                \draw (-1, 2);
                \draw (-.1,1) -- (-.1,0);
                \draw (+.1,1) -- (+.1,0);
                \draw (+0.9,1) -- (+0.9,0);
                \draw (+1.1,1) -- (+1.1,0);
                \filldraw[fill=isometry] (-1/2,-1/2) -- (-1/2,1/2) -- (3/2,1/2) -- (3/2,-1/2) -- (-1/2,-1/2);
                \draw (0.5, 0) node {\scriptsize $V^{(1)}$};
            \end{scope}
        }
    \end{tikzpicture}
\end{array}
&=
\begin{array}{c}
    \begin{tikzpicture}[scale=.3,thick]
        \foreach \x in {0,1,...,1}{
            \begin{scope}[shift={(3*\x, 0)}]
                \draw (-1, 2);
                \draw (-.1,1) -- (-.1,0);
                \draw (+.1,1) -- (+.1,0);
                \draw (+0.9,1) -- (+0.9,0);
                \draw (+1.1,1) -- (+1.1,0);
                \filldraw[fill=isometry] (-1/2,-1/2) -- (-1/2,1/2) -- (3/2,1/2) -- (3/2,-1/2) -- (-1/2,-1/2);
                \draw (0.5, 0) node {\scriptsize $V^{(1)}$};
            \end{scope}
        }
        \draw (-1, -3) -- (3+2, -3);
        \draw (0, -1.5) -- (0, -0.5);
        \draw[shift={(1, 0)}] (0, -1.5) -- (0, -0.5);
        \draw[shift={(3, 0)}] (0, -1.5) -- (0, -0.5);
        \draw[shift={(4, 0)}] (0, -1.5) -- (0, -0.5);
        \draw (0.0, -1.5) -- (0.0, -2.5);
        \draw (4, -1.5) -- (4, -2.5);
        \filldraw[fill=isometry] (-1/2,-2) -- (-1/2,-1) -- (3+3/2,-1) -- (3+3/2,-2) -- (-1/2,-2);
        \filldraw[fill=tensor, shift={(0, -1.5)}] (-1/2,-2) -- (-1/2,-1) -- (3+3/2,-1) -- (3+3/2,-2) -- (-1/2,-2);
        \draw (2, -1.5) node {\scriptsize $V^{(2)}$};
        \draw (2, -3) node {\scriptsize $P^{(2)}$};
    \end{tikzpicture}
\end{array}.
\label{eq:RG_circuit}
\end{align}
In \cref{eq:RG_circuit}, the lowest layer is again the part that is replaced by the fixed-point state in our algorithm, i.e., a product of $\ket{\omega}$ [cf.~\cref{eq:normal_fp_local}].
In this scheme, the lowest isometry $V^{(k)}$ acts across a distance $q$.
Though not strictly local, this can be done in a depth $O(q)$ utilizing SWAP gates. 
Subsequent isometries act over distances $q/2$, $q/4$ and so forth, leading to an overall circuit depth $T=O(q)$.

\emph{Approximation error.---}So far, we constructed efficient circuits for preparing $|\widetilde\phi_N\rangle$, having assumed that we block $q$ sites. The scaling of $q$ is a consequence of the following Lemma, which is adapted from Ref.~\cite{Piroli2021}.
\begin{lemma}
    \label{lm:1}
    Given a sequence of TI-MPS generated from a normal tensor, and for all $\gamma < 1/2$,
    \begin{equation} \label{eq:fid_err}
        \epsilon(\widetilde \phi_N, \phi_N) = O\left( \frac{N}{q} e^{-\gamma q/\xi} \right).
    \end{equation}
\end{lemma}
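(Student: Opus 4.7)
The strategy follows the telescoping approach of Ref.~\cite{Piroli2021}: combine an exponentially small per-block error with a sum over the $n = N/q$ blocks. The key input is a quantitative bound on the difference between $B$ and $\widetilde B = V P_\infty$, viewed as linear maps from the virtual pair $\mathbb{C}^{D^2}$ to the physical space $\mathbb{C}^{d^q}$. Since both share the isometry $V$, this reduces by isometry invariance to controlling $\|P - P_\infty\|_\infty$.

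\emph{Per-block bound.} First, I would show $\|P - P_\infty\|_\infty = O(e^{-\gamma q / \xi})$ for any $\gamma < 1/2$. Writing $E_A = \ket{\rho}\bra{\1} + R$ with $R$ of spectral radius $|\lambda_2| = e^{-1/\xi}$ (and no Jordan block at the leading eigenvalue, by normality) yields $\|E_A^q - \ket{\rho}\bra{\1}\|_\infty = \|R^q\|_\infty \le \poly(q)\, e^{-q/\xi}$. Since $P$ and $P_\infty$ arise from polar-decomposing (equivalently, square-rooting) the blocked transfer matrix and its fixed point, H\"older-$1/2$ continuity of the operator square root transfers this bound to $\|P - P_\infty\|_\infty$ at rate $e^{-q/(2\xi)}$; the $\poly(q)$ prefactor is absorbed by weakening the exponent to any $\gamma < 1/2$.

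\emph{Telescoping.} Writing the unnormalized amplitude vector $\ket{\Phi_N}$ with coefficients $\Tr(B^{I_1}\cdots B^{I_n})$ (and analogously $\ket{\widetilde\Phi_N}$), the difference decomposes as
\begin{equation}
    \ket{\Phi_N} - \ket{\widetilde\Phi_N} = \sum_{k=1}^{n} \ket{\Psi_k},
\end{equation}
where $\ket{\Psi_k}$ is obtained by substituting $\widetilde B$ on sites $1,\ldots,k-1$, $(B - \widetilde B)$ on site $k$, and $B$ on sites $k+1,\ldots,n$. The squared norm $\|\Psi_k\|^2 = \Tr[\widetilde E^{k-1}\, \Delta\, E^{n-k}]$ involves the ``difference transfer matrix'' $\Delta = \sum_I ((B - \widetilde B)^I)^* \otimes (B - \widetilde B)^I$, which satisfies $\|\Delta\|_\infty \le \|P - P_\infty\|_\infty^2$. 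Normality of $A$ ensures $E$ and $\widetilde E$ have a unique largest eigenvalue $1$, so $\|E^m\|_\infty,\ \|\widetilde E^m\|_\infty = O(1)$ uniformly in $m$, hence $\|\Psi_k\| = O(e^{-\gamma q/\xi})$. The triangle inequality then gives $\|\ket{\Phi_N} - \ket{\widetilde\Phi_N}\| = O(n\, e^{-\gamma q/\xi})$.

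\emph{Normalization and conclusion.} Since $\|\Phi_N\|^2 = \Tr E^n = 1 + O(e^{-q/\xi})$ and analogously $\|\widetilde\Phi_N\|^2 = \Theta(1)$ whenever $q$ is large enough that $E_B$ is close to its fixed point, passing to normalized states $\ket{\phi_N}, \ket{\widetilde\phi_N}$ preserves the scaling, yielding $\eps(\widetilde\phi_N, \phi_N) = O((N/q)\, e^{-\gamma q/\xi})$. The main obstacle is the per-block step: establishing the $e^{-\gamma q/\xi}$ rate for $\|P - P_\infty\|_\infty$ despite the non-Lipschitz nature of the operator square root at the origin is precisely what forces the strict inequality $\gamma < 1/2$, rather than the naive $\gamma = 1/2$ that one would guess from the rate of $E_B$ itself.
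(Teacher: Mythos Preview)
Your argument is correct. The paper's own proof is much shorter because it does not re-derive the key estimate: it writes
\[
\epsilon \le \big|1 - c_N |\langle\widetilde\phi_N|\phi_N\rangle|\big| + |c_N-1|\,|\langle\widetilde\phi_N|\phi_N\rangle|,
\]
identifies the first term with $|1 - |\langle\Omega|v_{\rm pos}\rangle||$ (the isometries $V_i$ cancel), and then simply \emph{cites} Eq.~(S29) of Ref.~\cite{Piroli2021} for the bound $O((N/q)e^{-\gamma q/\xi})$, while the second term is $O(e^{-N/\xi})$ from $c_N=\sqrt{\Tr E_A^N}$. Your telescoping argument is precisely a self-contained re-derivation of the cited \cite{Piroli2021} estimate, so at the level of ideas the two coincide; you are just doing explicitly what the paper outsources. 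Two minor remarks: (i) the identity $\|\Delta\|_\infty\le\|P-P_\infty\|_\infty^2$ and the step relating $P^2$ to $E_B$ both involve the realignment (partial transpose) of indices, which is not an operator-norm isometry---you pick up a harmless $D$-dependent constant there; (ii) your observation that $\widetilde E$ is exactly the rank-one fixed-point projector $|\rho\rangle\langle\1|$ makes the bound $\|\widetilde E^m\|=O(1)$ immediate. Your explanation of why $\gamma<1/2$ (H\"older-$1/2$ continuity of the operator square root absorbing the $\poly(q)$ prefactor) is a point the paper leaves implicit in the citation.
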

The proof can be found in~\cite{sm}.
Using \cref{lm:1}, it follows that $q = O (\log (N / \epsilon))$. In particular, blocking $q = \ceil*{ 2 \xi (1 + \eta) \ln N} \propto \log N $ sites gives $\eps = O(N^{- \eta})$ for any $\eta > 0 $.

We also numerically illustrate the exponential decay of \cref{eq:fid_err} in~\cite{sm} for preparing the 1D AKLT state~\cite{Affleck1987, Affleck1988} and an MPS family with tunable correlation length~\cite{Wolf2006}, which demonstrates that the circuit is also efficient in practice.

\emph{Inhomogeneous short-range correlated MPS.---}Our results can be straightforwardly extended to MPS that have a finite correlation length, but are not TI.
The setting here is that we are given a sequence of MPS $\{\ket{\phi_N}\}$ with bond dimension at most $D$. 
We define such a sequence to have finite correlation length if, after blocking $q = O (\log N)$ times, the resulting states can be approximated up to quasi-local isometries by a state consisting of nearest-neighbor entangled pairs $| \Omega \rangle = \bigotimes_{i=1}^{N/q}  \ket{\omega^i}_{R_i L_{i+1}} $,
with an error
$\eps(\Omega,\phi_\mathrm{pos}) \to 0$ as  $N \to \infty$.
Here
\begin{align}
\label{eq:phi_pos}
    | \phi_{{{\rm pos}}} \rangle = \begin{array}{c}
    \begin{tikzpicture}[scale=.4,thick]
        \foreach \x in {0,1,...,2}{
            \begin{scope}[shift={(3*\x, 0)}]
                \PEmptyTensor{0.5,-1.5}
            \end{scope}
        }
        \draw (0.50, -1.5) node {\scriptsize $P_{i-1}$};
        \draw (3.50, -1.5) node {\scriptsize $P_{i}$};
        \draw (6.50, -1.5) node {\scriptsize $P_{i+1}$};
        \draw[dotted] (-2.1,  -1.5) -- (-1.1, -1.5);
        \draw[dotted] (8.1,  -1.5) -- (9.1, -1.5);
    \end{tikzpicture}
\end{array}
\end{align}
arises after blocking $q$ sites and keeping the positive part of the decomposition of $\ket{\phi_N}$.
If the finite correlation assumption is satisfied, then the preparation scheme consists of preparing $|  \Omega \rangle$ and implementing the isometry, decomposed with either of the two methods. The resulting total depth is again $O(\log (N/\eps))$ with error $\eps(\Omega,\phi_\mathrm{pos})$, as in the TI case.

In the Supplemental Material~\cite{sm} we numerically show that this protocol can prepare inhomogeneous random MPS~\cite{garnerone2010typicality,haferkamp2021emergent,lancien2022correlation,Haag2023typical} efficiently.
For that, we use a simple extension of the Evenbly-Vidal algorithm~\cite{evenbly2009algorithms,lin2021real}, which efficiently variationally finds $|\Omega \rangle$.
This illustrates that our finite correlation length assumption, as defined earlier, is satisfied in a practical setting.

\emph{Preparations using measurements.---}Measurements and subsequent conditional unitaries can make state preparation much faster~\cite{briegel2001persistent,raussendorf2005long,aguado2008creation,Piroli2020,tantivasadakarn2021long,bravyi2022adaptive,Lu2022}.
Here we elaborate how such measurements could be used in our algorithm.

\emph{Tree-RG circuit with measurements.---}Local measurements and conditional local unitaries are the standard framework to perform quantum teleportation~\cite{bennett1993teleporting,gottesman1999demonstrating}, which can be used to reduce the depth of the tree-RG circuit.
Isometries appearing in \cref{eq:RG_circuit} act on a constant number of sites which, although spatially separated, can be teleported at neighboring registers with a constant overhead.
This can be achieved by creating nearest-neighbor entangled pairs, then performing simultaneous measurements, and correcting (without postselection) based on the measurement outcomes~\cite{nielsen2002quantum} (this process is also detailed in Ref.~\cite{Lu2022}).

Therefore every isometry in \cref{eq:RG_circuit} takes constant time using measurement. Crucially, however, the tree-RG circuit requires only $O(\log \log (N/\epsilon))$ layers (in contrast to Ref.~\cite{Lu2022}).
Since the fixed-point state can be prepared in constant time as before, this gives a preparation algorithm for short-range correlated MPS with depth $O(\log \log (N/\eps))$.

\emph{Long-range MPS using measurements.---}Another consequence of including measurements is that the creation of GHZ-like states $\ket{\chi_M} = \sum_{i=1}^b \alpha_i \ket{i}^{\otimes M}$ becomes possible in only constant depth~\cite{briegel2001persistent,Piroli2021,verresen2021efficiently}.
These states are closely related to the fixed points of TI-MPS\footnote{Here we consider TI-MPS with constant $D$. Note that this does not include all states that are TI and have area law entanglement (e.g., $W$-state~\cite{perez2007matrix,cirac2021matrix}).} which, up to an isometry, take the form~\cite{cirac2017matrix} 
\begin{equation}\label{eq:fp_general_ti}
  \ket{\Omega'}  = \sum_{j = 1}^b \alpha_j^{(N)} \bigotimes_{i=1}^{N/q} \ket{\omega_{j}}_{R_i L_{i+1}}.
\end{equation}
The normal case corresponds to $b=1$ for which $\ket{\Omega'} = \ket{\Omega}$ while, in general, $b$ is upper bounded by the number of blocks in the canonical form and $\alpha_j^{(N)}$ may depend on $N$~\cite{cirac2017matrix}.

Importantly, the different $\ket{\omega_j}$ are orthogonal~\cite{cirac2017matrix}, which suggests a preparation procedure for $\ket{\Omega'}$. First create $\ket{\chi_{N/q}}$, which can be done in constant depth with measurements (following, e.g., Ref~\cite{Piroli2021}). Subsequently, apply in parallel the isometries  $W: \ket{j} \mapsto \ket{\omega_j}_{R_i L_{i+1}}$ such that $\ket{\Omega'} = W^{\otimes N/q}  \ket{\chi_{N/q}}$, which also takes constant depth.

In~\cite{sm} we show how to explicitly obtain a state of the form \cref{{eq:fp_general_ti}} that approximates well the target $\ket{\phi_N}$ up to local isometries by blocking $q \propto \log (N/\epsilon)$ sites. As a result, following the same steps as in the tree-RG circuit with measurements we have a scheme that approximates all TI-MPS (short- or long-range correlated) with depth $T=O(\log \log (N/\eps))$ [cf.~\cref{dep_meas}]. If instead measurements are only used for the preparation of $\ket{\chi_{N/q}}$, the depth is $O(\log (N/\eps))$. 

Our construction generalizes to inhomogeneous long-range correlated MPS exactly as in the short-range case.

\emph{Connection to MERA.---}The circuit in the tree-RG scheme can be interpreted as a finite-range MERA with $O(\log \log N)$ layers, namely a shallow tensor tree acting on the fixed-point state.
Specifically, the isometries ${V^{(i)}}$ [cf.~\cref{eq:RG_circuit}] are identified with the isometries in finite-range MERA, and all disentanglers are the identity, save for the first layer, which is identified with the single layer of unitaries that prepare the fixed-point state.
Hence, within the approximation error $\eps$,
\begin{equation}
     \textrm{normal TI-MPS} \subset
     \begin{array}{c}
        \textrm{finite-range MERA} \\  \textrm{with $O(\log \log N)$ layers.} 
    \end{array}
\end{equation}

\emph{Discussion and outlook.---}Our results also imply that MPS in the same phase can be transformed into each other using a log-depth circuit, in contrast to the well-known quasilocal evolution corresponding to polylogarithmic depth circuit~\cite{osborne2006efficient,haah2021quantum,coser2019classification,Schuch2011}.
It would be interesting to explore whether our results could be exploited for applications other than state preparation.
Specifically, a number of protocols~\cite{Cramer2010a,haghshenas2022variational,Ran2020a,lin2021real, dilip2022data,rudolph2022synergy} implicitly or explicitly depend on the ability to prepare (or disentangle) MPS using a sequential circuit. It may be possible to replace the sequential circuit with ours to reduce the circuit depth in these protocols. 
Another direction would be to extend our lower-bound proof and the preparation algorithm to prepare certain higher dimensional tensor network states~\cite{cirac2021matrix}.

\textit{Acknowledgments.---}%
We thank Yujie Liu, Miguel Fr\'ias P\'erez, and Rahul Trivedi for insightful discussions.
DM acknowledges support from the Novo Nordisk Fonden under grants No.~NNF22OC0071934 and No.~NNF20OC0059939. GS is supported by the Alexander von Humboldt Foundation.
The research is part of the Munich Quantum Valley, which is supported by the Bavarian state government with funds from the Hightech Agenda Bayern Plus. We acknowledge funding from the German Federal Ministry of Education and Research (BMBF) through EQUAHUMO (Grant No.~13N16066) within the funding program quantum technologies---from basic research to market.
The numerical calculations were performed using the ITensor Library~\cite{itensor}.

\bibliography{merged_bibliography,jabref,library,GS_bibliography}

\appendix

\setcounter{equation}{0}
\setcounter{figure}{0}
\setcounter{table}{0}
\makeatletter
\renewcommand{\theequation}{S\arabic{equation}}
\renewcommand{\thefigure}{S\arabic{figure}}

\subsection*{Proof of \cref{lm:1} and extension to non-normal tensors}

Here we first show how to explicitly obtain the approximate state $| \widetilde \phi_N \rangle$ for the non-normal case. Then we prove \cref{lm:1}', which bounds the approximation error both for normal and non-normal TI-MPS and thus immediately implies \cref{lm:1}. 

We consider general TI-MPS defined by
\begin{align}
	\ket{\phi_N} = \frac{1}{c_N} \sum_{i_1, \ldots, i_N}\Tr\left( A^{i_1}\cdots A^{i_N} \right)\ket{i_1\cdots i_N}
	\label{eq:TI-MPS2}
\end{align}
where $c_N>0$ is the normalization constant.
After a gauge transformation, every tensor $A$ can be expressed in terms of a basis of normal tensors~\cite{cirac2017matrix}
\begin{align}
	A^i = \bigoplus_{j=1}^b {\rm diag} (\mu_{j,1} ,\dots , \mu_{j,m_j}) \otimes  A_j^i
	\label{eq:CF}
\end{align}
where the normal $A_j$ are in canonical form~II and produce orthogonal vectors in the thermodynamic limit~\cite{cirac2017matrix}.
Without loss of generality we assume $| \mu_{j,k} | \le 1$ with at least one of them having magnitude exactly one.
The normal case thus corresponds to $b=1$ and a single $|\mu_{1,1}| = 1$.
From \cref{eq:CF}, it follows that the general form of a TI-MPS is
\begin{align} \label{eq:app_phi_general}
    \ket{\phi_N} = \frac{1}{c_N} \sum_{j=1}^b \beta_j  \ket{v_j},
\end{align}
where
\begin{align}
    \label{eq:beta}
    \beta_j = \sum_{k=1}^{m_j} \mu_{j,k}^N
\end{align}
and $\ket{v_j}$ is the (unnormalized) MPS generated by the normal tensor $A_j$ (i.e., \cref{eq:TI-MPS2} without $c_N$).

Let us now define the approximate state $| \widetilde \phi_N \rangle$, which generalizes \cref{eq:phi_tilde}.
For that, as in the normal case, we block $q$ sites and perform a polar decomposition of the tensor $B^{i_1 \dots i_q} = A^{i_1} \dots A^{i_q}$. This results in $B = V P$, where $P: \mathbb C^{D^2} \to \mathbb C^{D^2}$ is positive-semidefinite and the isometry satisfies $V^\dagger V = \Pi$ for $\Pi$ the projector onto the image of $P$.
Since $V$ is an isometry, $P$ inherits the block structure of $A$
\begin{align}
         \begin{array}{c}
		\begin{tikzpicture}[scale=.4,thick,baseline={([yshift=-4ex]current bounding box.center)}]
			\PTensor{0,0}
    	\draw (-.5,1.5) node {\scriptsize $i_1$};
    	\draw (0.65,1.5) node {\scriptsize $i_2$};
        \end{tikzpicture}
        \end{array}
        = \bigoplus_{j=1}^b
        {\rm diag} (\mu_{j,1}^q ,\dots , \mu_{j,m_j}^q) \otimes  
        P_{j}^{i_1i_2}
\end{align}
where $i_1,i_2 = 1 , \dots, D$ and all $P_{j}$ are normal tensors. We can therefore express
\begin{align}
    \ket{\phi_N} = \frac{1}{c_N} \big( \bigotimes_{i=1}^{N/q} V_i \big) \sum_{j=1}^b \beta_j   \ket{v_{{\rm pos},j}}.
\end{align}
where $\ket{v_{{\rm pos},j}}$ is the unnormalized MPS generated by $P_{j}$.

The approximate state $|\widetilde \phi_N \rangle$ is defined by replacing each normal tensor $P_j$ with its fixed-point counterpart $P_{j,\infty}$ (analogous to \cref{eq:B_TM}).
Equivalently, we replace each $\ket{v_{{\rm pos},j}}$ by the corresponding fixed-point state $\ket{\Omega_j}$. That is,
\begin{align} \label{eq:app_tidle_phi_1}
    |\widetilde \phi_N \rangle &= \frac{1}{\widetilde c_N} \big( \bigotimes_{i=1}^{N/q} V_i \big) \sum_{j=1}^b \beta_j \ket{\Omega_j}
\end{align}
where $\ket{\Omega_j} = \bigotimes_{i=1}^{N/q} \ket{\omega_j}_{R_i L_{i+1}}$ are (normalized) nearest-neighbor entangled pairs over $\mathbb C^{D^2}$.
Importantly, since a basis of normal tensors was used for the decomposition of \cref{eq:CF}, they satisfy local orthogonality $\langle \omega_j | \omega _{j'} \rangle = \delta_{j j'}$~\cite{cirac2017matrix}. 
\Cref{eq:app_tidle_phi_1} also justifies the form of \cref{eq:fp_general_ti}, for which
\begin{align}
    \alpha_j^{(N)} = \frac{\beta_j}{\sqrt{\sum_l |\beta_l|^2}},
\end{align}
where we explicitly added a superscript $(N)$ to remember that $\beta_j$ may have a decaying contribution from $| \mu_{j,k} | < 1$ [\cref{eq:beta}].
This contribution vanishes in the limit of blocking $q \to \infty$, but here is taken into account, because neglecting it leads to an unwanted additional error contribution.
Because of this, $\sum_j\beta_j\ket{\Omega_j}$ in \cref{eq:app_tidle_phi_1} may strictly speaking not be a fixed point of the RG transformation in the non-normal case.

We now turn to the error estimate. For that, the key lemma comes from Ref.~\cite{Piroli2021}, where it was shown that for the normal tensor $A_j$
\begin{align}  \label{eq:app_error}
 |  1 - |\langle \Omega_j  \ket{v_{{\rm pos},j}}| | =  O \left( \frac{N}{q} \exp(-\gamma q/\xi_{jj} )  \right)	
\end{align}
for all $0< \gamma< 1/2$ (see Eq.~(S29) in the Supplemental Material of Ref.~\cite{Piroli2021}). Here
\begin{align}
\xi_{jj} = - 1/ \ln | \lambda_2^{(j)} |     
\end{align}
denotes the associated correlation length, i.e., $\lambda_2^{(j)}$ the subleading eigenvalue of the transfer matrix of the normal tensor $A_j$.

We are now ready to state and prove our result.

\begin{L1}[Approximation error]
    Consider a sequence of TI-MPS $| \phi_N \rangle$ generated by the tensor $A$. Then for all $0 < \gamma < 1/2$:
    \begin{enumerate}[(i)]
        \item If $A$ is normal with correlation length $\xi$,
    \begin{equation} \label{eq:fid_err_gen_normal}
        \epsilon = O\left( \frac{N}{q} e^{-\gamma q/\xi} \right).
    \end{equation}
    \item For a general non-normal $A$,
    and $q=o(N)$
        \begin{equation} \label{eq:fid_err_gen_non_normal}
        \epsilon = O\left( \frac{N}{q} e^{-\gamma q/\xi_\mathrm{diag}} \right)
    \end{equation}
    where $\xi_\mathrm{diag} = \max_{j} \xi_{jj}$.
    \end{enumerate}
\end{L1}
\begin{proof}
\textbf{(i)} Let us start with the case of a normal tensor. As detailed in the main text,
\begin{align}
    | \widetilde \phi_N \rangle =  \bigotimes_{i=1}^{N/q} V_i | \Omega \rangle .
\end{align}
Then, by the triangle inequality
\begin{align*}
    \epsilon &=   1 - | \langle  \widetilde \phi_N   | \phi_N \rangle  | \\
    &\le \big|  1 - c_N |\langle  \widetilde \phi_N   | \phi_N \rangle  | \,\big| + | c_N - 1   |   \, |\langle \widetilde \phi_N   | \phi_N \rangle | .
\end{align*}
The first term is exactly equal to the LHS of \cref{eq:app_error}. Using Cauchy-Schwarz and that $c_N = \sqrt{\tr (E_A^N)}$, the second is $O(\exp(-N/\xi))$.
Since $q =o(N)$, the first term dominates.

\textbf{(ii)} We now move on to the non-normal case.
Here, another set of length scales $\xi_{jj'}$ plays a role, which is defined as follows.
Consider the inner product of two MPS over $N$ sites, $\langle v_j | v_{j'} \rangle$,
where $\ket{v_j}, \ket{v_{j'}}$ are generated by normal tensors $A_j$, $A_{j'}$ that belong to different basis elements [cf. \cref{eq:CF,eq:app_phi_general}].
Then
\begin{align}\label{eq:app_decay_mixed}
    | \langle v_j | v_{j'} \rangle | = O \left( e^{-N / \xi_{jj'}} \right)
\end{align}
where 
\begin{align}
    \xi_{jj'} = - 1 / \ln \tau_{\max}
\end{align}
where $\tau_{\max}$ the spectral radius of the mixed transfer matrix $E_{jj'}=\sum_iA_j^{i*}\otimes A_{j'}^i$ and $\tau_{\rm max}<1$ (see Lemma~A.2 in \cite{cirac2017matrix}).

Using triangle and Cauchy-Schwarz inequalities,
\begin{align*}
    \epsilon =  &  1 - | \langle  \widetilde \phi_N   | \phi_N \rangle | \\
    \le  & \big| 1 -  \frac{c_N}{\widetilde c_N}  | \langle  \widetilde \phi_N   | \phi_N \rangle | \, \big| + \big| \frac{c_N}{\widetilde c_N} - 1 \big||\langle \widetilde \phi_N   | \phi_N \rangle |.
\end{align*}
For the first term, we have
\begin{align*}
    &\big | 1 -  \frac{c_N \widetilde c_N}{\widetilde c_N^2} | \langle  \widetilde \phi_N   | \phi_N \rangle | \, \big| \le \\
     &  \frac{\sum_j |\beta_j|^2 | 1 - \langle \Omega_j | v_{{\rm pos},j} \rangle |}{\sum_{l} |\beta_l|^2} 
     + \left|  \frac{\sum_{jj'} \beta_j^* \beta_{j'}  \langle \Omega_j | v_{{\rm pos},j'} \rangle}{\sum_{l} |\beta_l|^2} \right|,
\end{align*}
where we used that $\widetilde c_N^2=\sum_j|\beta_j|^2$.
To bound the first fraction, we use \cref{eq:app_error} and get $O \left( \frac{N}{q} \exp(-\gamma q/\xi_\mathrm{diag} )  \right)$ where $\xi_\mathrm{diag} = \max_{j} \xi_{jj}$.
By \cref{eq:app_decay_mixed} the second fraction is $O(\exp(-N/ \xi_\mathrm{off-diag}))$ where $\xi_\mathrm{off-diag} = \max_{j \ne j'} \xi_{jj'}$.
The remaining term is
\begin{align*}
    &\big| c_N/\widetilde c_N - 1 \big| \le | c_N^2 / \widetilde c_N^2 -1 | \\ 
    \le & \left| \frac{\sum_j |\beta_j|^2 (1 - \langle v_j | v_j \rangle)}{\sum_l|\beta_l|^2} \right|
    + \left|\frac{\sum_{j \ne j'} \beta_j^* \beta_{j'}  \langle v_{j'} | v_{j} \rangle}{\sum_l|\beta_l|^2} \right| .
\end{align*}
The terms of the first sum are $O(\exp(-N/ \xi_{jj}))$, while those of the second sum $O(\exp(-N/ \xi_{jj'}))$.

Putting everything together, we get
\begin{equation}
    \epsilon = O\left( \frac{N}{q} e^{-\gamma q/\xi_\mathrm{diag}} \right)
    + O \left(  e^{-N/\xi_\mathrm{off-diag}}\right),
\end{equation}
where $\xi_\mathrm{diag} = \max_{j} \xi_{jj}$ and $\xi_\mathrm{off-diag} = \max_{j \ne j'} \xi_{jj'}$.
If we further assume $q=o(N)$, the second contribution disappears, giving \cref{eq:fid_err_gen_non_normal}.
\end{proof}

\subsection*{Proof of \cref{th:1}}

Before we present the proof, let us introduce the following lemma, which we will use to distinguish the states based on the mismatch between states with strictly finite correlation length and states with exponentially decaying correlations. 
\begin{lemma}[Exponentially decaying correlations]
    \label{lm:2}
    Let $\{\ket{\phi_N}\}$ be a sequence of TI-MPS generated by an injective tensor $A$ with finite correlation length $\xi>0$ [cf.~\cref{eq:TI-MPS2}].
    Then, we can always find two local operators $\O_1,\O'_s$ acting on spins $1$ and $s$ with $||\O||=||\O'||=1$ such that for any integer $s>1$ and sufficiently large $N$,
    \begin{subequations}   
        \begin{eqnarray}
            \label{auxform1}
            \langle \phi_{N}|\O_1|\phi_{N}\rangle =  \langle \phi_{N}|\O_s'|\phi_{N}\rangle &=& 0,\\
            \label{auxform2}
            \langle \phi_{N}|\O_1\O_s'|\phi_{N}\rangle  &\ge& c e^{-(s-1)/\xi}
        \end{eqnarray}
    \end{subequations}
    where $c>0$ is independent of $N,s$.
\end{lemma}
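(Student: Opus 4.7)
The plan is to exploit the transfer-matrix structure of the normal TI-MPS together with the fact that the subleading eigenvalue $\lambda_2$ of $E_A$ satisfies $|\lambda_2|=e^{-1/\xi}\in(0,1)$. Using the canonical decomposition $E_A=\ket{\rho}\bra{\1}+R$ from \cref{eq:Ek_decomp}, one has $E_A^n=\ket{\rho}\bra{\1}+R^n$ for $n\geq1$, so the $N\to\infty$ connected two-point function of local Hermitian operators $\hat O,\hat O'$ placed at sites $1$ and $s$ reduces to
\begin{equation*}
    C_\infty(s)=\bra{\1}\,E_A^{\hat O}\,R^{s-2}\,E_A^{\hat O'}\,\ket{\rho},
\end{equation*}
with $E_A^{\hat O}$ the impurity transfer matrix carrying an insertion of $\hat O$. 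The goal is to produce $\hat O,\hat O'$ for which $|C_\infty(s)|\gtrsim|\lambda_2|^{s-2}$, and then to shift and normalize them to enforce \cref{auxform1} exactly.

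\textbf{Accessing the subleading eigenspace.} The crucial step shows that the linear map $\hat O\mapsto E_A^{\hat O}\ket{\rho}$ (and dually $\hat O\mapsto\bra{\1}E_A^{\hat O}$) covers the subleading (generalized) eigenspace of $R$. This follows from the injectivity that is built into normality: after blocking a constant number $\ell$ of adjacent sites, the products $A^{i_1}\cdots A^{i_\ell}$ span all $D\times D$ matrices, and combined with the invertibility of $\rho$ the impurity-insertion map becomes surjective onto $\mathbb{C}^{D^2}$. One can therefore pick $\hat O,\hat O'$ (supported on $\ell$-site blocks near positions $1$ and $s$, still counted as ``local'') so that $E_A^{\hat O'}\ket{\rho}$ has nonzero projection onto the dominant eigenvector of $R$ at $\lambda_2$, and dually for the left contraction. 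A spectral-projection argument then gives $|C_\infty(s)|=|c_0|\,|\lambda_2|^{s-2}(1+o(1))$ with $c_0\neq0$, and Jordan blocks at $\lambda_2$ only enrich this with polynomial-in-$s$ prefactors. When $\lambda_2$ is complex, the factor $\lambda_2^{s-2}$ carries an oscillating phase; this is harmless because the operators are allowed to depend on $s$ (as reflected by the subscript of $\O_s'$), so for each $s$ one can choose a real sign or a phase adjustment within the surjective family above to make the correlator positive.

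\textbf{Enforcing the one-point constraints and finite-$N$ corrections.} With $\hat O,\hat O'$ fixed, I would redefine $O=(\hat O-\mu_N\1)/\|\hat O-\mu_N\1\|$ and $O'=(\hat O'-\mu_N'\1)/\|\hat O'-\mu_N'\1\|$, where $\mu_N=\langle\phi_N|\hat O_a|\phi_N\rangle$ is independent of the site $a$ by translation invariance (analogously for $\mu_N'$). This enforces \cref{auxform1} exactly while leaving the connected correlator invariant up to an $N$-independent positive rescaling, since $\|\hat O-\mu_N\1\|$ converges to $\|\hat O-\Tr(\hat O\rho_1)\1\|>0$ as $N\to\infty$. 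The finite-$N$ correlator itself differs from $C_\infty(s)$ by a contribution of order $\|R^{N-s}\|=O(e^{-(N-s)/\xi})$, which is negligible versus $c\,e^{-(s-1)/\xi}$ provided $N$ is sufficiently large relative to $s$, yielding \cref{auxform2} with an $N$-independent constant $c>0$.

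\textbf{Main obstacle.} The delicate point is ruling out an ``accidental'' cancellation whereby every single-site impurity insertion is orthogonal to the dominant eigenspace of $R$. The escape is the blocking step above: normality fixes an injectivity length $\ell$ beyond which the impurity-insertion map is surjective onto $\mathbb{C}^{D^2}$, so admissible $\hat O,\hat O'$ must exist, and the constant $\ell$ is absorbed into the universal $c$ in the final bound. A secondary concern is the sign oscillation when $\lambda_2$ is complex, which is handled by letting $\hat O,\hat O'$ depend on $s$ through a phase adjustment within the surjective family, without affecting the scaling $|\lambda_2|^{s-1}$.
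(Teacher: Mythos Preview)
Your proposal is correct and rests on the same transfer-matrix idea as the paper, but the paper's execution is more direct and avoids several of your intermediate steps. Since the hypothesis is that $A$ is already \emph{injective} (not merely normal), the single-site impurity map $\O\mapsto E_\O=\sum_{ij}\langle i|\O|j\rangle\,(A^i)^*\otimes A^j$ is surjective onto $M_{D^2}$ without any blocking. The paper exploits this fully: it simply sets $E_\O$ and $E_{\O'}$ to rank-one matrices in the eigenbasis of $E_\1$, imposing $\langle L_i|E_\O|R_i\rangle=0$ for all $i$ (so the one-point functions vanish \emph{exactly} at every finite $N$, not just after mean subtraction), and $\langle L_1|E_\O|R_i\rangle=\langle L_i|E_{\O'}|R_1\rangle=0$ for $i>2$ (so only the $\lambda_2$ term survives in the correlator). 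This yields $\Delta\approx\lambda_2^{s-1}c'$ with no subleading interference, hence a bound uniform in $s$ without any spectral-projection asymptotics.

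By contrast, your route---block to injectivity length, pick operators with nonzero overlap on the $\lambda_2$ eigenspace, then argue dominance $|C_\infty(s)|=|c_0|\,|\lambda_2|^{s-2}(1+o(1))$, then subtract means and renormalize---is sound but leaves more to check: the $o(1)$ is asymptotic in $s$, so obtaining a constant $c>0$ valid for all $s>1$ requires an extra argument (or else the paper's trick of killing the other eigenvalues outright). The mean-subtraction step is also unnecessary once you realize that injectivity lets you choose $E_\O$ with vanishing diagonal in the $\{|R_i\rangle\langle L_i|\}$ basis. Both approaches share the same residual subtlety when $\lambda_2$ is complex or negative, handled by allowing $\O,\O'$ to depend on $s$.
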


\begin{proof}
Consider the connected correlation function
\begin{equation}
    \Delta= \langle \phi_N|\O_1\O_s'|\phi_N\rangle - \langle \phi_N|\O_1|\phi_N\rangle \langle \phi_N|\O_s'|\phi_N\rangle,
\end{equation}
where $\O_1$ and $\O_s$ are two (potentially different) operators placed at sites $1$ and $s$.
We have 
\begin{equation}
    \begin{aligned}
        \Delta &= \frac{1}{c_N^2} [\Tr(E_\1^{N-s-1} E_\O E_\1^{s-1} E_{\O'})\\
        &- \Tr(E_\1^{N-1} E_\O)\Tr(E_\1^{N-1} E_{\O'})],
    \end{aligned}
\end{equation}
where the normalization $ c_N = \sqrt{\Tr(E_\1^{N})}$, and 
\begin{equation}
    E_\Q = \sum_{i,j=1}^d \langle i|\Q|j\rangle \; (A^i)^*\otimes A^j,\quad \Q\in\{\1,\O,\O'\}
\end{equation}
where $d$ is the physical dimension of the MPS.
Given the spectrum of $E_\1$ we can always take $N$ sufficiently large so that we can approximate with an arbitrarily small error,
$E_\1^{N-1} = E_\1^{N-s-1}=|R_1\rangle\langle L_1|+O(e^{-N/\xi})$,
where $\xi=-1/\ln(|\lambda_2|)$ and $1 = \lambda_1 > | \lambda_2| > \dots$ are the eigenvalues of $E_\1$.
Note that in the main text we use the gauge in which $\ket{R_1}=\ket\rho$ and $\ket{L_1}=\ket\1$.
Here,
$\langle L_1|R_1\rangle=1$, so that $c_N\approx 1$ and 
\begin{equation}
    \Delta \approx \sum_{i=2}^{D^2} \lambda_i^{s-1} \langle L_1|E_\O|R_i\rangle \; \langle L_i|E_{\O'}|R_1\rangle,
\end{equation}
where $D$ is the bond dimension and 
we have written
$E_\1^{s-1}=\sum_i \lambda_i^{s-1}\ket{R_i}\bra{L_i}$.
Since the tensor $A$ is injective~\cite{perez2007matrix}, we can always choose $\O$ (and $\O'$), such that the corresponding transfer matrix $E_\O=|A\rangle\langle B|$ for arbitrary $A,B$ (up to a normalization constant).
In particular, we can use this to impose that
\begin{subequations}
\begin{eqnarray}
	\langle L_i|E_\O |R_i\rangle= \langle L_i|E_{\O'} |R_i\rangle &=& 0, \quad \forall i\\
	\langle L_1|E_\O |R_i\rangle= \langle L_i|E_{\O'} |R_1\rangle &=& 0, \quad \forall i> 2,\\
	\langle L_1|E_\O |R_2\rangle\langle L_2|E_{\O'} |R_1\rangle&=&c'>0.
\end{eqnarray}
\end{subequations}
The first line ensures (\ref{auxform1}), while the second and third ensure (\ref{auxform2}) for sufficiently large $N$, with $c=c'/2$, where $1/2$ is an arbitrary constant chosen for concreteness.
\end{proof}

Now, we can prove \cref{th:1}.
Let $\{\ket{\phi_N}\}$ be a sequence of TI normalized MPS on $N$ sites generated by a normal tensor $A$, and $\{\ket{\psi_N}\}$ a sequence of states obtained by applying a depth-$T$ local quantum circuit to a product state and define the error $\eps=1-|\bra{\phi_N}\psi_N\rangle|$.
\begin{T1}[restated]
    If $T=o(\log N)$ there is some $N_0$ such that for all $N>N_0$ we have $\eps >1/2$.
\end{T1}
\begin{proof}
Let us assume that $T=o[\log(N)]$ and $T>2\xi$,
since we can always add layers of identity operators to increase the depth of the circuit.
We approximate $\{\ket{\phi_N}\}$ through $\{\ket{\widetilde\phi_N}\}$ [cf.~\cref{eq:phi_tilde}] obtained by blocking $q_N=\lceil 2(1+\eta)\xi\ln N \rceil$ with $\eta>0$ and use \cref{lm:1} to bound the error as
\begin{equation}
\label{eq:tilde_bound}
    \eps=1-|\bra{\widetilde\phi_N}\phi_N\rangle|
    <c_0N^{-\eta}
\end{equation}
for some constant $c_0$ independent of system size.
We take $N$ such that we have a large number of blocks, all of the same size, $q_N$, except for the last one, which may be larger.
This is always possible, as $q_N=O(\log N)$.
We also take $N$ large enough to ensure $q_N>T$.

We thus have
\begin{equation}
    \label{auxdp}
    d(\phi_N,\psi_N) \ge d(\psi_N,\widetilde\phi_N)-\sqrt{2c_0}N^{-\eta/2},
\end{equation}
where $d(\rho,\sigma)=||\rho-\sigma||_1/2$ is the trace distance~\cite{nielsen2002quantum},
as well as an upper bound on trace distance from fidelity combined with \cref{eq:tilde_bound}
\begin{equation}
    \label{eq:tilde_distance}
    d(\phi_N,\widetilde\phi_N)\le \sqrt{1- |\langle \phi_N |\widetilde\phi_N \rangle|^2}
    \le \sqrt{2c_0}N^{-\eta/2}.
\end{equation}
In the following, we will find a lower bound to the first term in \cref{auxdp} to make the difference larger than 1/2.
We will also drop the subscript $N$ to simplify notation.

To obtain a bound on the distance of $\ket{\psi_N}$ and $\ket{\widetilde\phi_N}$, let us consider instead a suitable subsystem.
To that end, we divide the chain into $\lfloor N/(2q)\rfloor$
blocks of size $2q$ each, with the last block potentially smaller than $2q$.
We then trace over all $2q$ spins at the sites contained in the intervals $[4mq+1,2(2m+1)q]$, with $m=0,1,\ldots$ in both states $\ket{\widetilde\phi}$ and $\ket{\psi}$.
In case the last block we constructed is smaller than $2q$, we trace it as well.
If we perform such an operation on $|\widetilde\phi\rangle\langle \widetilde\phi|$, we obtain a product state 
\begin{equation}
    \rho=\rho_0^{\otimes k},
\end{equation}
which follows from the definition of $\ket{\widetilde\phi_N}$ and the fact that it is invariant under translation by $q$ sites.
We have 
\begin{equation}\label{eq:k}
    k = \lfloor N/4q\rfloor.
\end{equation}

Analogously, applying the same trace to $|\psi\rangle\langle \psi|$
we also obtain a product state, because $q>T$,
\begin{equation}
    \sigma=\sigma_1\otimes\ldots\otimes \sigma_k.
\end{equation}
Using the fact that the trace distance is contractive under tracing, and bounding it in terms of the Uhlmann fidelity, we have~\cite{nielsen2002quantum}
\begin{equation}
	\label{auxdp3}
	d(\widetilde\phi,\psi) \ge d(\rho,\sigma) \ge 1- F(\rho,\sigma),
\end{equation}
with the Uhlmann fidelity between two density matrices $\rho$ and $\sigma$ defined as $F(\rho, \sigma)=\operatorname{Tr} \sqrt{\sqrt{\rho} \sigma \sqrt{\rho}}$.

Given that $\rho$ and $\sigma$ are product states, we have
\begin{equation}
    \label{inequax}
    F(\rho,\sigma)=\prod_{i=1}^k F(\rho_0,\sigma_i) \le (1-\delta)^{k/2}
\end{equation}
where $\delta={\rm min}_i d(\rho_0,\sigma_i)^2$, and where we have used another bound between the fidelity and the trace distance~\cite{nielsen2002quantum}.

Next we will lower bound $\delta$ using \cref{lm:2}.
However, we have to be a bit careful since this lemma applies to $\phi$ instead of $\widetilde\phi$.
Fortunately, we can use \cref{eq:tilde_distance} to replace one with the other. For the sake of concreteness, we will bound $d(\rho_0,\sigma_1)$ but the same analysis applies to every $\sigma_i$.

Let us take $s=2T+1$, and the operators $\O,\O'$ from \cref{lm:2} to define
\begin{eqnarray}
	a_\Q&=&\langle \phi|\Q|\phi\rangle, \\
	\widetilde a_\Q&=&\Tr(\rho_0 \Q)=\langle \widetilde\phi|\Q|\widetilde\phi\rangle, \\
	b_\Q&=&\Tr(\sigma_1 \Q)=\langle \psi|\Q|\psi\rangle,
\end{eqnarray}
where $\Q\in\{\O_1, \O_s',\O_1\O_s' \}$.
Given that $||\O||=||\O'||=1$, we can bound
\begin{equation}
    \label{auxbound}
    d(\rho_0,\sigma_1) \ge \max_{\Q=\O_1,\O'_s,\O_1\O_s'}\big||\widetilde a_\Q|-|b_\Q|\big|.
\end{equation}

According to \cref{lm:2}, $a_{\O_1}=a_{\O'_s}=0$, and 
\begin{equation}\label{eq:mu}
	a_{\O_1\O'_s}> \mu= c_1 e^{-2T/\xi}.
\end{equation}
In order to use \cref{auxbound}, we need to connect $\widetilde a_\Q$ to $a_\Q$.
Using \cref{eq:tilde_distance}, we choose sufficiently large $N$ to obtain $d(\widetilde\phi,\phi) \le \mu/3$.
This immediately implies that $|\widetilde a_\O|,|\widetilde a_{\O'}| < \mu/3$ and $\widetilde a_{\O_1\O'_s} > 2 \mu/3$.
Moreover, since $\psi$ is created from a product state by a depth-$T$ circuit, every connected correlation for operators at a distance larger than $2T$ vanishes.
Since $s=2T+1$ we therefore have $b_{\O_1\O'_s}=b_{\O_1} b_{\O_s'}$.
We now show that for any choice of $b_{\O_1}$, $ b_{\O_s'}$, the distance $d(\rho_0, \sigma_1)$ is bounded below by a constant.
Thus, we minimize \cref{auxbound} with respect to $x_1=b_{\O_1}$ and $x_2=b_{\O_s'}$, imposing $\mu<1/2$:
\begin{equation}\label{dpaux2}
\begin{aligned}
    d(\rho_0,\sigma_1) &\ge \min_{x_1,x_2}\left[\max(|x_1-\frac{\mu}{3}|,|x_2-\frac\mu3|,|\frac{2\mu}{3}-x_1x_2|)\right] \\
    &= \min_{x}\left[\max(|x-\mu/3|,|2\mu/3-x^2|)\right]>\mu/3.
\end{aligned}
\end{equation}
From this it immediately follows that $\delta> \mu^2/9$. 

Putting \cref{dpaux2} and \cref{inequax} into \cref{auxdp}, we arrive at
\begin{equation}
	d(\phi,\psi) > 1-(1-\mu^2/9)^{k/2} - \sqrt{2c_0}N^{-\eta} .
\end{equation}
The last term is negligible for large $N$.
If then $k>2/\delta$, one has that $(1-\delta)^{k/2}< 1/e$ and thus $d(\phi,\psi)>\sqrt{3/4}$, which implies $\eps>1/2$.

So we need to find $N$ for which $k>2/\delta$.
Using \cref{eq:k} and the definition of $q$, we have $k>N/(10\xi\ln N)$, where we chose $\eta<1/4$.
Using $\delta>\mu^2/9$ and the definition of $\mu$ (\cref{eq:mu}), we find $(18e^{4T/\xi})/c_1^2>2/\delta$.
Putting this together, we need to find $N$ such that 
\begin{equation}
    k>\frac{N}{5q}>\frac{N\gamma}{10\xi\log N}>\frac{18}{c_1^2}e^{4T/\xi}>\frac{2}{\delta}.
\end{equation}
Since $T=o[\log(N)]$,
we can always find an $N_0$ such that this is fulfilled for all $N>N_0$.
\end{proof}

\subsection*{Numerical results}

Here we provide numerical evidence that our proposed strategies are capable of preparing relevant short-range correlated (inhomogeneous) MPS under open boundary conditions, which are of the form
\begin{align}
	\ket{\phi_N}_{\rm obc} = \sum_{i_1,i_2, \ldots, i_N} A_{[1]}^{i_1} A_{[2]}^{i_2} \cdots A_{[N]}^{i_N} \ket{i_1 i_2\cdots i_N},
	\label{eq:obc-MPS}
\end{align}
where $A_{[1]}^{i_1}$ and $A_{[N]}^{i_N}$ are tensors on the boundary, while the tensors in the bulk are defined similarly to those in \cref{eq:TI-MPS}. To prepare such states, we construct the isometries analytically, as delineated in the main text, and use a simple extension of the Evenbly-Vidal algorithm~\cite{evenbly2009algorithms,lin2021real} to variationally find the fixed-point state of the form in \cref{eq:fp_general_ti} that maximize the fidelity between the approximate state [cf.~\cref{eq:phi_tilde}] and the target state (see more details in \cref{sec:vari_optm}). Since the variational space comprises only (superpositions of) product states of entangled pairs [cf.~\cref{eq:fp_general_ti}], this optimization scheme is found to be highly efficient. Moreover, this local optimization strategy can effectively encapsulate the inhomogeneity present in the target state, which makes it especially suitable for preparing non-TI MPS and states with open boundary conditions, and can be directly extended to the case of long-range MPS.

In the following, we numerically study the performance of our algorithm for preparing three types of short-range correlated MPS: (1) the 1D AKLT state~\cite{Affleck1987, Affleck1988}, (2) an MPS family with tunable correlation length~\cite{Wolf2006}, 
and (3) inhomogeneous random MPS~\cite{garnerone2010typicality}, which illustrate that our algorithm succeeds in general inhomogeneous settings, too.

\subsection{Preparation of AKLT state and the MPS family~\cite{Wolf2006}}
The 1D AKLT state is a paradigmatic state in condensed matter physics, with important application in measurement-based quantum computation~\cite{Brennen2008}. The 1D AKLT state can be formed by first having a product state of singlets consisting of virtual qubits that connect neighboring sites of the 1D chain, then projecting the virtual qubits of two neighboring pairs to their symmetric subspace (with spin $S=1$). In our calculation, we consider the spin $S=1$ at each site is formed by the symmetric subspace of two actual qubits in the quantum device, and the resulting AKLT state is an MPS of bond dimension $D=2$ and physical dimension $d=4$.

Figure \ref{fig_numerics}(a) illustrates the scaling of the error per block $\eps/M$ (where the number of blocks $M = N/q$)[cf.~\cref{eq:fid_err}] as a function of blocking range $q$ (measured in units of the correlation length $\xi$, with $\xi_{\rm AKLT} = 1/ \ln 3$) for the AKLT state. In our calculations, we chose the number of blocks $M=200$. As expected, both our circuit constructions have comparable performance, with $\eps/M$ exhibiting an exponential decay with $q/\xi$, in accordance with the bound of \cref{eq:fid_err}. This verifies the predicted scaling of $T=O(\log (N/\eps))$.

To further assess the efficacy of our algorithm for MPS with varying correlation lengths $\xi$, we also investigate the MPS class of bond (physical) dimension $D=2$ ($d=2$), with matrices in \cref{eq:obc-MPS} of the form~\cite{Wolf2006}
\begin{equation} \label{g_mps}
	A_{[j]}^{0} =\left(\begin{array}{ll}
0 & 0 \\
1 & 1
\end{array}\right), \quad A_{[j]}^{1} =\left(\begin{array}{ll}
1 & g \\
0 & 0
\end{array}\right), \forall j \in (2,...,N-1),
\end{equation}
and the boundary tensors are chosen as the $2\times 2$ identity matrix. The correlation length of this MPS class can be tuned by the parameter $g$ as $\xi=\left|\left(\ln \frac{1-g}{1+g}\right) \right|^{-1}$. The results on the scaling of the error per block $\eps/M$ is shown in \cref{fig_numerics}(a). We see that, for the AKLT state and the MPS class of various correlation lengths $\xi \approx 4, 16$, the scaling of $\eps/M$ show almost the same behavior as $\eps/M \sim \exp(-\gamma_{\rm num} q/\xi)$ with  $\gamma_{\rm num} \approx 2$, where the number of blocks $M=N/q$. Notably,  $\gamma_{\rm num}$ is much larger than the analytically derived value $0<\gamma<1/2$ [cf.~\cref{eq:fid_err}]. Therefore, for these two classes of states, in practice, one can prepare them faster than predicted in the worst-case bound \cref{eq:fid_err}.

\subsection{Scaling of CNOT depth for various schemes}
\label{cnot_sec}
To make the scaling $T=O(\log N/\eps)$ of our protocol more relevant to the current devices where the multi-qubit unitaries are decomposed into CNOT gates and single-qubit rotations, we present a simple comparative study for the scaling of the CNOT depth $T_{\rm CNOT}$ required to prepare the MPS class [cf.~ \cref{g_mps}] with correlation length $\xi \approx 4$ with the fidelity ${\cal F}=|\langle \phi_N |\widetilde\phi_N \rangle|^2=0.9$.
We compare four different schemes: (1) sequential-RG scheme, (2) tree-RG scheme, (3) tree-RG scheme assisted by measurements, and (4) the sequential scheme~\cite{Schoen2005}. We present the result directly here and provide an explanation of our estimation of $T_{\rm CNOT}$ later.

Figure \ref{fig_numerics}(b) shows $T_{\rm CNOT}$ for these four different schemes. As anticipated, for both the sequential-RG and the tree-RG schemes, we observe the overall scaling $T=O(\log N)$, and these methods result in a significantly smaller $T_{\rm CNOT}$ compared to that of the sequential method, particularly for large system sizes $N$. Additionally, due to the fact that the blocking range $q$ can only increase discretely, we observe plateau-like features in the scaling of $T_{\rm CNOT}$ in \cref{fig_numerics}(b). Moreover, when the tree-RG scheme further assisted by measurements, the $T_{\rm CNOT}$ is further reduced compared to the stand-alone tree-RG scheme, yielding the smallest $_{\rm CNOT}$ among all schemes, with only $T_{\rm CNOT}\approx 100$ when creating this state of $N=10^6$ qubits.

\begin{figure}[h!]
\centering
\includegraphics[width=0.48\textwidth]{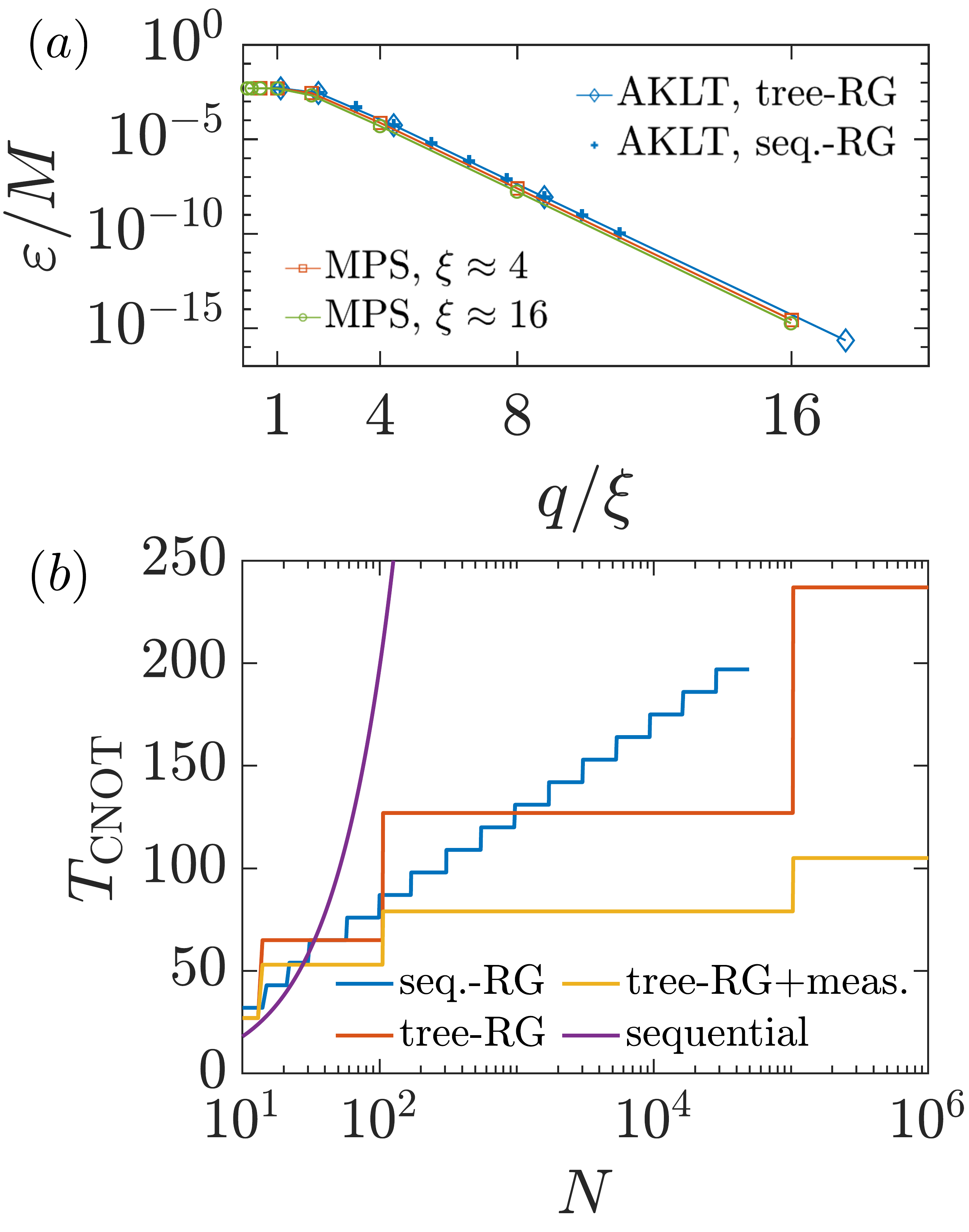}
\caption{(a) The error per block $\eps/M$ [cf.~\cref{eq:fid_err}] plotted as a function of the blocking range $q/\xi$, for preparing states in the MPS family~\cite{Wolf2006} and the AKLT state. (b) Estimated CNOT depth $T_{\rm CNOT}$ for preparing the MPS family of correlation length $\xi \approx 4$ with fidelity ${\cal F} = 0.9$ using different schemes. The `sequential-RG' denote the sequential-RG circuit, `tree' denote the tree-RG circuit. The `tree + meas.' denote the measurement-assisted tree-RG circuit, and 'sequential' refers to the sequential circuit~\cite{Schoen2005}.}
        \label{fig_numerics}
\end{figure}

\subsubsection{Estimation of the CNOT depth}
\label{esti_cnot}

This technical subsection elaborates on how we estimate $T_{\rm CNOT}$ for different schemes studied in \cref{fig_numerics}(b). For the sake of simplicity, our focus lies on the MPS with the bond (physical) dimension $D=2$ ($d=2$), which aligns with the results presented in \cref{fig_numerics}(b). Estimation for MPS with varying physical or bond dimensions can be accomplished in a similar manner.

In general, the schemes considered in \cref{fig_numerics}(b) involve two types of gates: (1) Isometries mapping from $m$ qubits to $n\ge m$ qubits and (2) SWAP gates utilized to change the qubit location of the fixed-point state in the sequential-RG scheme and to implement long-range isometries in the tree-RG scheme.

It is well known that a SWAP gate can be decomposed as three CNOT gate. For the isometries (1), we estimate the CNOT depth of each isometry, denoted by $T_{\rm iso}(m,n)$, using its theoretical lower bound~\cite{PhysRevA.93.032318}
\begin{equation} \label{}
	T_{\text {iso }}(m, n) = \lceil \frac{1}{4}\left(2^{n+m+1}-2^{2 m}-2 n-m-1\right) \rceil .
\end{equation}
This theoretical limit could potentially be reached by existing gate decomposition approaches~\cite{rakyta2022approaching}. Specifically, for preparing MPS with $d=D=2$, we have $T_{\rm iso}(1,2) = 2$ for the sequential scheme, $\quad T_{\rm iso}(2,3) = 10$ for the sequential-RG scheme, and $\quad T_{\rm iso}(2,4) = 26$ for the tree-RG scheme.

In the following we count the gates used in each scheme for preparing MPS of system size $N$ and bond (physical) dimension $D=2 (d=2)$:
\begin{itemize}
    \item The sequential scheme uses one layer of 0-to-2 qubit isometry on the boundary and $N-2$ layers of 1-to-2 qubit isometries in the bulk~\cite{Schoen2005,Schon2007}.
    \item The sequential-RG scheme with blocking range $q$ uses $q-2$ layers of SWAP gates and $q-2$ layers of 2-to-3 qubit isometries. 
    \item The tree-RG scheme iteratively blocks the chain for $\sim \log(q)$ times. The first blocking (blocking to injectivity) produce a layer of two-qubit gates. After that, the $m$-th blocking ($m \ge 2$) produce a layer of 2-to-4 qubit isometries, where the largest distance between qubits within the same isometry is $2^m$. We implement such long-distance isometries using local gates by first swapping the qubits to the center region of the isometry, then local implementing the 2-to-4 isometry, and finally swapping the qubits back. This leads to an additional $(2^{m} - 4)$ layers of SWAP gates for each $m$.
    \item The measurement-assisted tree-RG scheme simply eliminates the SWAP cost in the aforementioned tree-RG scheme, since now the long-distance isometries can be executed by gate teleportation~\cite{bennett1993teleporting,gottesman1999demonstrating}. It's noteworthy that there are also Bell-state ancilla preparation, measurement, and postprocessing costs involved in this scheme, but our focus here is solely on the circuit depth of the scheme.
\end{itemize}

Based on the above components, we can directly estimate $T_{\rm CNOT}$ as a function of the system size $N$ and the required blocking range $q$. Here, $q$ can be obtained from the scaling of the error per block (analogous to that in \cref{fig_numerics}(a)) and the required state preparation fidelity ${\cal F}$, thereby resulting in the $T_{\rm CNOT}$ shown in \cref{fig_numerics}(b).

\subsection{Preparation of inhomogeneous random MPS}

Here we illustrate our algorithm in the fully inhomogeneous case of random MPS.
Since any MPS can be brought to a canonical form with isometric tensors, a natural way to define the corresponding ensemble is by choosing each tensor randomly according to the Haar measure of the unitary group $U(dD)$~\cite{garnerone2010typicality}.
As these states correspond to the ground states of disordered local Hamiltonians~\cite{cirac2021matrix}, 
they can be considered as representative states of the trivial topological phase~\cite{chen2011classification,Schuch2011}, and it is of interest to explore various properties of this class~\cite{garnerone2010typicality,haferkamp2021emergent,lancien2022correlation,Haag2023typical}.

Since random MPS are expected to be short-range correlated~\cite{Haag2023typical}, we anticipate that our protocol can efficiently prepare this class of states. In \cref{fig_disorder}, we display the scaling of the error per block $\eps/M$ with the blocking range $q$ for randomly sampled 1000 states of $d=D=2$ using the tree-RG protocol, and observe the asymptotic scaling
\begin{equation} \label{disorder_err}
\eps/M \sim \exp(-c q)
\end{equation}
in all instances (note that the number of blocks $M=N/q$), with $c$ varying only slightly between different individual cases. This scaling is reminiscent of the behavior predicted in \cref{eq:fid_err}, and it directly implies that our protocols can prepare such inhomogeneous random MPS efficiently.

\begin{figure}[h!]
	\centering
	\includegraphics[width=0.48\textwidth]{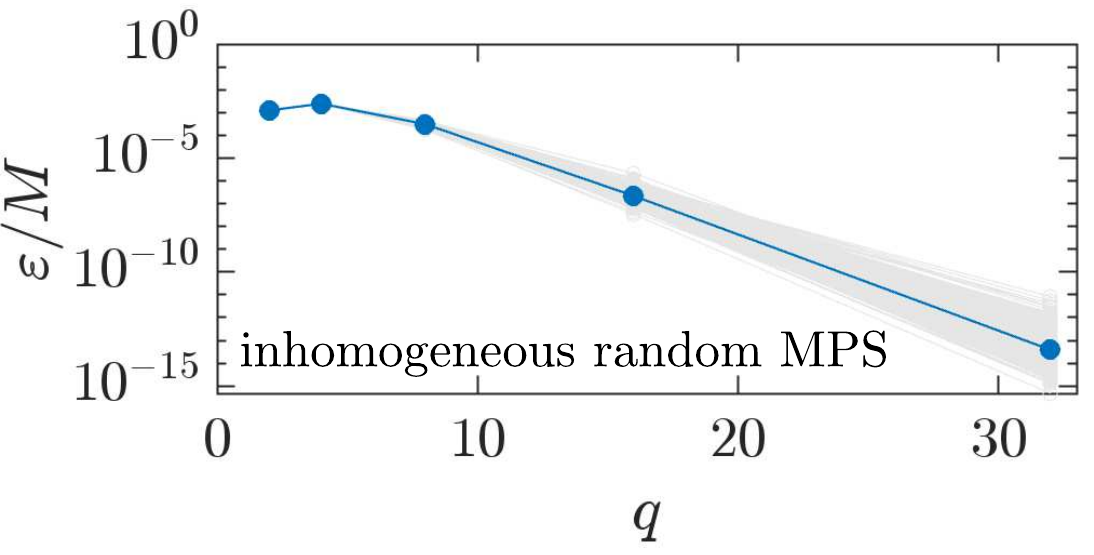}
        \caption{Error per block $\eps/M$ as a function of the blocking range $q$ for inhomogeneous random MPS of bond dimension $D=2$. The gray lines represent a total of 1000 individual samples, while the blue line illustrates the average behavior (averaged over $c$ in \cref{disorder_err}).}
        \label{fig_disorder}
\end{figure}


\subsection{The local variational optimization}
\label{sec:vari_optm}

Here we provide more details of the optimization algorithm developed in Ref.~\cite{evenbly2009algorithms,lin2021real}, which is used here for preparing inhomogeneous MPS.

Our task is to prepare the MPS $\ket{\phi_N}_{\rm obc}$ [cf.~\cref{eq:obc-MPS}] of a finite physical dimension $d$ and bond dimension $D$. As described in the main text, we first block each neighboring $q$ site to analytically construct the isometries. This leaves us with $\left|\phi_{\mathrm{pos}}\right\rangle$ [cf.~\cref{eq:phi_pos}], which we aim to approximate with a state $|\Omega\rangle$ consisting of nearest-neighbor entangled pairs
\begin{equation} \label{eq:inhomo_fp}
|\Omega\rangle=\bigotimes_{i=1}^{N / q}\left|\omega^i\right\rangle_{R_i L_{i+1}}.
\end{equation}

The optimization algorithm aims to maximize the fidelity $\mathcal{F}=|\langle\phi_{\mathrm{pos}}\mid\Omega\rangle|^2$ by optimizing the form of nearest-neighbor entangled pairs $\{ |\omega^i\rangle_{R_i L_{i+1}} \}$ in $|\Omega\rangle$. For this, we write each $|\omega^i\rangle_{R_i L_{i+1}}$ as a local two-qudit unitary $W_i$ acting on the product state, as 
\begin{equation} \label{}
	\left|\omega^i\right\rangle_{R_i L_{i+1}} = W_i \left|00\right\rangle_{R_i L_{i+1}}.
\end{equation}
To optimize the $i$-th unitary $W_{i}$, we write the overlap between $\left|\phi_{\mathrm{pos}}\right\rangle$ and $| \Omega \rangle$ as
\begin{equation} \label{}
\langle\phi_{\mathrm{pos}}\mid\Omega\rangle=\langle\phi_{{\rm pos}}|W_{i}\underset{|\Omega_{i}\rangle}{\underbrace{\prod_{j\neq {i}}^{N/q}W_{j}\bigotimes_{j=1}^{N/q}\left|00\right\rangle _{R_{j}L_{j+1}}}}={\rm Tr}[E_{i}W_{i}],
\end{equation}
where $E_{i} = |\Omega_{i}\rangle \langle\phi_{\mathrm{pos}}|$ is the \textit{environment} of the unitary $W_{i}$. By absorbing the overall phase factor into the unitary $W_{i}$, the fidelity can be expressed as the square of the real part of the overlap, as
\begin{equation} \label{}
	{\cal F}_{\rm pos}= {\rm Re}[\langle \phi_{\mathrm{pos}}\mid\Omega\rangle]^{2} ={\rm Re}[{\rm Tr}[E_{i}W_{i}]]^{2}.
\end{equation}
Therefore, one can directly find the unitary $W_{i}$ to maximize ${\cal F}_{\rm pos}$ by doing a singular value decomposition of the environment $E_{i}=X_{i}S_{i}Y_{i}^{\dag}$, and choosing $W_{i} = Y_{i}X_{i}^\dag$. By iterating through all the gates $\{W_{i}\}$ and sweeping back and forth until the fidelity $\cal F$ converges, we find the $| \Omega \rangle$ that best approximate $\left|\phi_{\mathrm{pos}}\right\rangle$~\cite{evenbly2009algorithms,lin2021real}.

Since there are $N/q$ commuting two-qudit unitaries $\{W_i\}$ used for creating $|\Omega \rangle$ and the target $|\phi_{\mathrm{pos}} \rangle$ is an MPS of finite physical dimension and bond dimension $D$, the construction of environment $\{ E_i \}$ and its SVD decomposition are computationally efficient, with the computational effort of constructing each $E_i$ scales as $O(ND^4)$. Moreover, we do not encounter the problem of falling into local minima for obtaining the results shown in \cref{fig_numerics,fig_disorder}, thanks to the small number of variational parameters in $|\Omega \rangle$. Overall, this allows the variational optimization to efficiently find the best approximation $|\Omega \rangle$ to $|\phi_{\mathrm{pos}} \rangle$. We also point out that there are more advanced methods for optimizing isometries~\cite{hauru2021riemannian}, which might be useful for the optimization task here as well. Finally, for the case of long-range correlated MPS, the fixed point $|\Omega' \rangle$ [cf.~\cref{eq:fp_general_ti}] can be created by applying a single layer of local two-qudit unitaries on the underlying GHZ-type state, and the optimization algorithm described here can be directly extended to that case to optimize those two-qudit unitaries.

\end{document}